\newtheorem{prop}{Proposition}
\newtheorem{theorem}{Theorem}
\newtheorem{cor}{Corollary}
\newtheorem{lemma}{Lemma}
\theoremstyle{definition}
\newtheorem{Remark}{Remark}
\def\sP{\mathsf{P}}
\def\sQ{\mathsf{Q}}
\def\mcP{\mathcal P}
\def\mcQ{\mathcal Q}
\def\mcPb{\bar{\mcP}}
\def\lba{[}
\def\rba{]}
\def\H{\mathcal H}
\def\R{\mathbb R}
\def\mcR{\mathcal R}
\def\Q{\mathbb Q}
\def\Z{\mathbb Z}
\def\Zt{\Z[t,t^{-1}]}
\def\C{\mathbb C}
\def\Ct{\C[t,t^{-1}]}
\def\lra{\longrightarrow}
\def\mbf1{\mathbf 1}
\def\tb1{\widetilde{\mbf1}}
\def\tr{\mbox{tr}}
\def\tp{\bar{p}}
\def\sp{\mathsf p}
\def\sq{\mathsf q}
\def\sa{\mathsf a}
\def\m{\mathbf m}
\def\n{\mathbf n}
\def\gmod{\rm\bf -gmod}
\def\An{A_{n}}
\def\lra{\longrightarrow}
\def\dmod{\rm\bf -mod}
\def\End{\mathrm{End}}
\def\EndHp{\End_{\H'}}
\def\HomHp{\Hom_{\H'}}
\def\HomHpp{\Hom_{'\H'}}
\def\id{\mathrm{id}}
\newcommand{\Hom}{{\rm Hom}}
\title{A note on Heisenberg categorification}
\date{}
\begin{document}

\author{Na Wang}
\author{Zhixi Wang}
\author{Ke Wu}
\author{Jie Yang}
\author{Zifeng Yang}

\address{Na Wang\\College of mathematics and information sciences
\\Henan University, Kaifeng, 475001, China\\}
\email{wangnawanda@163.com}

\address{Zhixi Wang\\School of mathematical science\\ Capital Normal University,
Beijing, 100048, China\\}
\email{wangzhx@mail.cnu.edu.cn}

\address{Ke Wu\\School of mathematical science\\Capital Normal University,
Beijing, 100048, China\\}
\email{wuke@mail.cnu.edu.cn}

\address{Jie Yang\\School of mathematical science\\Capital Normal University,
Beijing, 100048, China\\}
\email{yangjie@cnu.edu.cn}

\address{Zifeng Yang\\School of mathematical science\\Capital Normal University,
Beijing, 100048, China}
\email{yangzf@mail.cnu.edu.cn}

\begin{abstract}
A categorification of the Heisenberg algebra is constructed in \cite{K} by Khovanov
using graphical calculus, and left with a conjecture on the isomorphism between the
Heisenberg algebra and Grothendieck ring of the constructed category.
We give a proof of Khovanov's conjectured statement in this paper from
a categorification of some deformed Heisenberg algebra in \cite{WWY}.
\end{abstract}

\maketitle
\setcounter{page}{1}

Keywords: Heisenberg Algebra; Categorification; String Diagram; Grothendieck Group.

\section{Introduction}
The classical Heisenberg algebra is the algebra of infinite rank generated by the sequence
$\{a_n\}_{n\in{\Z}}$ with the defining relations
\begin{equation}\label{e:1.1}
[a_n, a_m]=n\,\delta_{n+m,0} \, 1.
\end{equation}
The generators $a_n$ for $n\in {\Z}$ can be realized from the fermions $\phi_n, \phi_n^* (n\in{\Z})$,
which satisfy the following relations
\begin{equation}\label{e:1.2}
[\phi_n,\phi_m]_+=0, \,\, [\phi_n^*,\phi_m^*]=0,\,\, [\phi_n^*,\phi_m]_+=\delta_{n+m,0},
\end{equation}
where we use the notation $[X,Y]_+=XY+YX$. Let ${\mathcal A}$ be the Clifford algebra generated by the
$\phi_n,\phi^*_n, n\in{\Z}$. Then an element of ${\mathcal A}$ can be written as a finite linear
combination of monomials of the form
\[
\phi_{m_1}\cdots\phi_{m_r}\phi^*_{n_1}\cdots\phi^*_{n_s}, \quad {\text{where }} m_1<\cdots<m_r, \,\,
n_1<\cdots <n_s.
\]
By introducing a variable $k$, we define the fermionic generating functions as the formal
sums
\[
\phi(k)=\sum\limits_{n\in{\Z}+\frac{1}{2}} \phi_n k^{-n-\frac{1}{2}}, \quad
\phi^*(k)=\sum\limits_{n\in{\Z}+\frac{1}{2}}\phi^*_nk^{-n-\frac{1}{2}}.
\]
Then the $a_n$ can be realized by the following equation
\begin{equation}\label{e:fb}
\sum\limits_{n\in{\Z}}a_n k^{-n-1}=:\phi(k)\phi^*(k):
\end{equation}
where $::$ is the normal order, see \cite{MJD} for detail.

The integral form $H_{\Z}$ of Heisenberg algebra $H$ using the vertex operators in terms of $a_n$
is generated by $\{p_n,q_n\}_{n\in{\mathbb N}}$ with relations
\begin{eqnarray}
q_n p_m & = & \sum_{k\geq 0}p_{m-k} q_{n-k}, \label{e:1.4}\\
 q_n q_m & = & q_m q_n, \label{e:1.5}\\
 p_n p_m & = & p_m p_n, \label{e:1.6}
\end{eqnarray}
in which $p_{0} = q_{0} = 1$ and $p_{k} = q_{k} = 0$ for $k <0$,
thus the summation in Equation (\ref{e:1.4}) is finite.
These generators $q_n, \ p_m$ are obtained as the homogeneous
components in $z$ of the ¡°halves of vertex operators¡±
\begin{equation}\label{e:pqrelation}
\sum\limits_{m\ge 0} q_m z^{-m}=\exp\left(\sum_{m\geq 1} \frac{1}{m}a_{m} z^{-m}\right), \qquad \text{and} \qquad
\sum\limits_{m\ge 0} p_m z^m = \exp\left(\sum_{m\geq 1} \frac{1}{m}a_{-m} z^m\right),
\end{equation}
which play a very important role in the conformal field theory and the theory of quantum
algebras. We notice that $H_{\Z}\otimes {\Q}$ is also generated over $\Q$ by the sequence $\{a_n\}_{n\in\Z}$
with the relation (\ref{e:1.1}).

The authors use slightly different presentations of integral forms of Heisenberg algebra in literature.
Khovanov \cite{K} introduced a calculus of planar diagrams for biadjoint functors and degenerate affine
Hecke algebras, which led to an additive monoidal category whose Grothendieck ring contains some
integral form of the Heisenberg algebra (and Khovanov conjectured that they are isomorphic).

A deformed integral form $'H_{\Zt}$ of Heisenberg algebra is given as the algebra
generated over the ring $\Zt$ by $\{{\mathsf p}_n, {\mathsf q}_n\}_{n\in{\mathbb N}}$ with relations
\begin{eqnarray}
{\sq}_n {\sp}_m & = & \sum_{k\geq 0}[k+1]{\sp}_{m-k} {\sq}_{n-k}, \label{e:1.8}\\
 {\sq}_n {\sq}_m & = & {\sq}_m {\sq}_n, \label{e:1.9} \\
 {\sp}_n {\sp}_m & = & {\sp}_m {\sp}_n,\label{e:1.10}
\end{eqnarray}
where ${\sp}_{0} = {\sq}_{0} = 1$, \, ${\sp}_{k} = {\sq}_{k} = 0$ for $k <0$,
and the quantum integer $[k]=\frac{t^{k}-1}{t-1}$.
Under the relation (\ref{e:pqrelation})
between $\{{\sp}_n,{\sq}_n\}_{n\ge 0}$ and $\{{\sa}_n\}_{n\in \Z, n\ne 0}$ when the coefficients
are extended to ${\C}[t,t^{-1}]$, we can find that the deformed Heisenberg algebra $'H_{\Ct}$
is generated by $\{{\sa}_n\}_{n\in\Z}$ with the defining relation
\begin{equation}\label{e:arelation}
[{\sa}_n, {\sa}_m]=n(1+t^n)\delta_{n+m, 0}\, 1.
\end{equation}
The authors of the current paper give a categorified version of the
algebra $'H_{\Zt}$ in \cite{WWY} by the method of \cite{CL}.

In this short note, we will discuss some rings related to Khovanov's construction in \cite{K}
and the categorification of $'H_{\Zt}$ in \cite{WWY}, and show that Khovanov's construction
gives a categorification of the classical Heisenberg algebra $H_{\Z}$, a statement
conjectured by Khovanov \cite{K}.

The authors are grateful to Morningside Center of Chinese
Academy of Sciences for providing excellent research environment and financial
support to our seminar in mathematical physics.
This work is also partially supported by NSF grant 11031005 and grant KZ201210028032.

\section{Categorification of the deformed Heisenberg algebra $'H_{\Zt}$}\label{section:2}
Let $V=\C v$ be a 1-dimensional vector space over the field $\C$ of complex numbers,
$\Lambda^*(V)$ the exterior algebra of $V$ with the basis $B$ consisting of  $1, \, v$. We define a $\C$-linear map $tr:\ \Lambda^*(V)\rightarrow \C$ by setting
\[
tr (v)=1,\ \ \ tr(1)=0.
\]
Let $B^\vee$ be the basis of $\Lambda^*(V)$ dual to $B$ with respect to the non-degenerate bilinear form $\langle a,b \rangle := \tr(ab)$. Under this bilinear form, we see that $1^\vee=v,\ v^\vee=1$.
$\Lambda^*(V)$ is graded as $\Lambda^*(V)={\C}\oplus {\C}v$, with the degree of
$b\in \Lambda^*(V)$ denoted by $|b|$, and $|\alpha|=0$ for $0\ne \alpha\in {\C}$, $|v|=1$.

Let $'\H'$ be the $2$-category consisting of the following items.
\begin{itemize}
\item {\bf Objects ($0$-cells) of $'\H'$.} The objects are the integers.
\item {\bf $1$-morphisms ($1$-cells) of $'\H'$.} Let $\sP:\ n\longmapsto n+1$ and $\sQ:\ n+1\longmapsto n$ be
two fundamental $1$-morphisms, $\mathbf 1$ the identity $1$-morphism on the objects,
and the $1$-morphisms $\sP\lba l\rba$, $\sQ\lba l\rba$, and ${\mathbf 1}\lba l\rba$ for
integers $l$ which we call the shifted $1$-morphisms by dimension $l$. The $1$-morphisms
of $'\H'$ are generated from $\sP\lba l\rba, \sQ\lba l\rba$, and $\mathbf 1\lba l\rba$ by finite
compositions and finite direct sums. The dimension shifting on the $1$-morphisms satisfy
the equality $A\lba k\rba\cdot B\lba l\rba=A\cdot B\lba k+l\rba$ for $1$-morphisms $A$ and $B$.
Note that $\sP\lba 0\rba, \sQ\lba 0\rba, {\mathbf 1}\lba 0\rba$ and $\sP, \sQ, {\mathbf 1}$ are
respectively the same.
\item {\bf $2$-morphisms ($2$-cells) of $'\H'$.} The $2$-morphisms constitute a vector space over $\C$
which is generated by the
planar-oriented diagrams on the plane strip ${\R}\times [0,1]$
from the lower endpoints to the upper endpoints, modulo relative
to boundary isotopies and some local relations which are described in detail in the following
paragraphs. The endpoints in the planar string diagrams represent either $\sP$ or $\sQ$.
If the lower end or the top end doesn't have points, then it means that the
the $2$-morphism representing the planar string diagram comes from or goes to the $1$-morphism $\mathbf 1$.
The $2$-morphisms are also graded and the grading is compatible with the
grading of $1$-morphisms.
\end{itemize}

An upward oriented strand denotes the identity 2-morphism $\id: \sP \rightarrow \sP$ while a downward oriented strand denotes the identity 2-morphism $\id: \sQ \rightarrow \sQ$. Upward-oriented lines and downward-oriented lines carrying dots  labeled by elements $b \in \Lambda^*(V)$ are other 2-morphisms of $'\H'$, for example,
\begin{equation*}
\begin{tikzpicture}[>=stealth,baseline=25pt]
\draw (0,0) -- (0,1)[->];
\filldraw (0,0.5) circle (2pt)+(0.25,0) node {$b$};
\draw [shift={+(1.8,0.5)}](0,0) node{$\in {\rm Hom}_{'\H'}(\sP,\sP),$};
\draw [shift={+(4.5,0)}](0,0) -- (0,1)[<-];
\filldraw [shift={+(4.5,0.67)}](0,0) circle (2pt)+(0.25,0) node {$b'$};
\filldraw [shift={+(4.5,0)}](0,0.4) circle (2pt)+(0.25,0.4) node {$b''$};
\draw [shift={+(6.5,0.5)}](0,0) node{$\in {\rm Hom}_{'\H'}(\sQ,\sQ),$};
\draw [shift={+(9.2,0)}](0,0) -- (1,1)[<-];
\draw [shift={+(9.2,0)}](1,0) -- (0,1)[->];
\filldraw [shift={+(9.2,0)}](0.3,0.3) circle (2pt)+(0.45,0.4) node {$b$};
\draw [shift={+(11.9,0.5)}](0,0) node {$\in {\rm Hom}_{'\H'}(\sQ\sP,\sP\sQ).$};
\end{tikzpicture}
\end{equation*}
On planar diagrams, compositions of 2-morphisms are depicted upward from bottom to top.
All these are similar to \cite{CL}, so are the local relations which the planar diagrams are requested to satisfy.

The local relations are the following. First, the dots can move freely along strands and through intersections, for
example,
\begin{equation*}
\begin{split}
&\begin{tikzpicture}[>=stealth, baseline=25pt]
\draw [shift={+(0,0)}](0,0) -- (1,1)[<-];
\draw [shift={+(0,0)}](1,0) -- (0,1)[->];
\filldraw [shift={+(0,0)}](0.3,0.3) circle (2pt)+(0.45,0.4) node {$b$};
\draw [shift={+(1.85,0.5)}] node{ $=$};
\draw [shift={+(2.7,0)}](0,0) -- (1,1)[<-];
\draw [shift={+(2.7,0)}](1,0) -- (0,1)[->];
\filldraw [shift={+(2.7,0)}](0.8,0.8) circle (2pt)+(0.95,0.75) node {$b$};
\draw [shift={+(3.7,0)}] (0.3,0) node{,};
\draw [shift={+(6,0)}](0,0) -- (1,1)[<-];
\draw [shift={+(6,0)}](1,0) -- (0,1)[->];
\filldraw [shift={+(6,0)}](0.2,0.8) circle (2pt)+(0.35,0.9) node {$b$};
\draw [shift={+(7.85,0.5)}] node{ $=$};
\draw [shift={+(8.7,0)}](0,0) -- (1,1)[<-];
\draw [shift={+(8.7,0)}](1,0) -- (0,1)[->];
\filldraw [shift={+(8.7,0)}](0.8,0.2) circle (2pt)+(0.95,0.3) node {$b$};
\draw [shift={+(9.7,0)}] (0.3,0) node{,};
\end{tikzpicture}\\
&\begin{tikzpicture}[>=stealth]
\draw (0,0.75) arc (180:360:.5);
\draw (0,1.2) -- (0,0.75) ;
\draw (1,1.2) -- (1,0.75) [<-];
\filldraw  (0,0.9) circle (2pt)+(.25,0) node {$b$};
\draw [shift={+(2,0.6)}](0,0) node{$=$};
\draw [shift={+(3,0)}](0,0.75) arc (180:360:.5);
\draw [shift={+(3,0)}](0,1.2) -- (0,0.75) ;
\draw [shift={+(3,0)}](1,1.2) -- (1,0.75) [<-];
\filldraw [shift={+(3,0)}] (0.5,0.25) circle (2pt)+(.75,0.25) node {$b$};
\draw [shift={+(5,0.6)}](0,0) node{$=$};
\draw [shift={+(6,0)}](0,0.75) arc (180:360:.5);
\draw [shift={+(6,0)}](0,1.2) -- (0,0.75) ;
\draw [shift={+(6,0)}](1,1.2) -- (1,0.75) [<-];
\filldraw [shift={+(6,0)}] (1,0.9) circle (2pt)+(1.25,0.9) node {$b$};
\draw [shift={(6,0)}] (1.5, 0.4) node{.};
\end{tikzpicture}
\end{split}
\end{equation*}
%
%
%
%
Second, collision of dots is controlled by the multiplication in the exterior algebra ${\Lambda}^*(V)$:
\begin{equation*}
\begin{tikzpicture}[>=stealth]
\draw (0,0) -- (0,1)[->];
\filldraw (0,0.5) circle (2pt)+(-0.35,0) node {$b_1b_2$};
\draw [shift={+(0.8,0.5)}](0,0) node{$=$};
\draw [shift={+(1.6,0)}](0,0) -- (0,1)[->];
\filldraw [shift={+(1.6,0)}] (0,0.7) circle (2pt)+(0.25,0.7) node {$b_1$};
\filldraw [shift={+(1.6,0)}] (0,0.3) circle (2pt)+(0.25,0.3) node {$b_2$};
\draw [shift={+(2.6,0)}](0,0) node{$,$};

\draw [shift={+(5.6,0)}] (0,0) -- (0,1)[<-];
\filldraw [shift={+(5.6,0)}] (0,0.5) circle (2pt)+(-0.35,0.5) node {$b_1b_2$};
\draw [shift={+(6.4,0.5)}](0,0) node{$=$};
\draw [shift={+(7.2,0)}](0,0) -- (0,1)[<-];
\filldraw [shift={+(7.2,0)}] (0,0.7) circle (2pt)+(0.25,0.7) node {$b_1$};
\filldraw [shift={+(7.2,0)}] (0,0.3) circle (2pt)+(0.25,0.3) node {$b_2$};
\draw [shift={+(7.2,0)}](1,0) node{$.$};
\end{tikzpicture}
\end{equation*}
Third, dots on strands supercommute when they move past one another, for example:
\begin{equation*}
\begin{tikzpicture}[>=stealth]
\draw (0,0) -- (0,1)[->];
\filldraw (0,0.3) circle (2pt)+(-0.25,0) node {$b_1$};
\draw [shift={+(0.6,0.5)}](0,0) node{$\cdots$};
\draw [shift={+(1.2,0)}](0,0) -- (0,1)[->];
\filldraw [shift={+(1.2,0)}] (0,0.7) circle (2pt)+(0.25,0.7) node {$b_2$};
\draw [shift={+(2.2,0)}](0,0.5) node{$=$};
\draw [shift={+(3.7,0)}] (0,0.5) node {$(-1)^{|b_1||b_2|}$};
\draw [shift={+(5.2,0)}] (0,0) -- (0,1)[->];
\filldraw [shift={+(5.2,0)}] (0,0.7) circle (2pt)+(-0.25,0.7) node {$b_1$};
\draw [shift={+(5.8,0.5)}](0,0) node{$\cdots$};
\draw [shift={+(6.4,0)}](0,0) -- (0,1)[->];
\filldraw [shift={+(6.4,0)}] (0,0.3) circle (2pt)+(0.25,0.3) node {$b_2$.};
\end{tikzpicture}
\end{equation*}
Last, local relations contain the following:
\begin{align}\label{local1}
&\begin{tikzpicture}[>=stealth,baseline=0.8cm]
\draw (0,0) .. controls (0.8,0.8) .. (0,1.6)[->];
\draw (0.8,0) .. controls (0,0.8) .. (0.8,1.6)[->] ;
\draw (1.2,0.8) node {=};
\draw (2,0) --(2,1.6)[->];
\draw (2.8,0) -- (2.8,1.6)[->];
\draw [shift={+(2.8,0)}] (0.5,0) node{,};
\draw [shift={+(6.8,0)}](0,0) -- (1.6,1.6)[->];
\draw [shift={+(6.8,0)}](1.6,0) -- (0,1.6)[->];
\draw [shift={+(6.8,0)}](0.8,0) .. controls (0,0.8) .. (0.8,1.6)[->];
\draw [shift={+(6.8,0)}](2,0.8) node {=};
\draw [shift={+(6.8,0)}](2.4,0) -- (4,1.6)[->];
\draw [shift={+(6.8,0)}](4,0) node{\qquad ;} -- (2.4,1.6)[->];
\draw [shift={+(6.8,0)}](3.2,0) .. controls (4,0.8) .. (3.2,1.6)[->];
\end{tikzpicture}\\
\label{local2}
&\begin{tikzpicture}[>=stealth,baseline=0.8cm]
\draw (0,0) .. controls (0.8,0.8) .. (0,1.6)[<-];
\draw (0.8,0) .. controls (0,0.8) .. (0.8,1.6)[->];
\draw (1.2,0.8) node {=};
\draw (1.6,0) --(1.6,1.6)[<-];
\draw (2.4,0) -- (2.4,1.6)[->];
\draw (2.8,0.8) node {$-$};
\draw (3.2,1.4) arc (180:360:.4);
\draw (3.2,1.6) -- (3.2,1.4) ;
\draw (4,1.6) -- (4,1.4) [<-];
\draw (4,.2) arc (0:180:.4) ;
\filldraw  (3.6,1) circle (2pt)+(.2,0) node {$v$};
\draw (4,0) -- (4,.2) ;
\draw (3.2,0) -- (3.2,.2) [<-];
\draw (4.4,0.8) node {$-$};
\filldraw  (5.2,0.6) circle (2pt)+(.2,0) node {$v$};
\draw (4.8,1.4) arc (180:360:.4);
\draw (4.8,1.6) -- (4.8,1.4) ;
\draw (5.6,1.6) -- (5.6,1.4) [<-];
\draw (5.6,.2) arc (0:180:.4);
\draw (5.6,0) -- (5.6,.2) node{\qquad ;} ;
\draw (4.8,0) -- (4.8,.2) [<-];
\end{tikzpicture}\\
\label{local3}
&\begin{tikzpicture}[>=stealth,baseline=0.8cm]
\draw (0,0) .. controls (0.8,0.8) .. (0,1.6)[->];
\draw (0.8,0) .. controls (0,0.8) .. (0.8,1.6)[<-] ;
\draw (1.2,0.8) node {=};
\draw (1.84,0) --(1.84,1.6)[->];
\draw (2.64,0) -- (2.64,1.6)[<-];
\draw [shift={+(2.64,0)}] (0.5,0) node{,};
\draw [shift={+(7.5,0.8)}](-0.8,0) .. controls (-0.8,.4) and (-.24,.4) .. (-.08,0) ;
\draw [shift={+(7.5,0.8)}](-0.8,0) .. controls (-0.8,-.4) and (-.24,-.4) .. (-.08,0) ;
\draw [shift={+(7.5,0.8)}](0,-0.8) .. controls (0,-.4) .. (-.08,0) ;
\draw [shift={+(7.5,0.8)}](-.08,0) .. controls (0,.4) .. (0,0.8) [->] ;
\draw [shift={+(7.5,0.8)}](0.56,0) node {$=0$\quad ;};
\end{tikzpicture}\\
\label{local4}
&\begin{tikzpicture}[>=stealth,baseline=0cm]
\draw [shift={+(0,0)}](0,0) arc (180:360:0.5cm) ;
\draw [shift={+(0,0)}][->](1,0) arc (0:180:0.5cm) ;
\filldraw [shift={+(1,0)}](0,0) circle (2pt);
\draw [shift={+(0,0)}](1,0) node [anchor=east] {$b$};
\draw [shift={+(0,0)}](1.75,0) node{$\qquad = \tr(b)$ \quad .};
\end{tikzpicture}
\end{align}
We notice that the above local relations imply that the second relation of (\ref{local1}) is
satisfied when any of the arrows of the strings in the diagram on the
left hand side of the equality is reversed (and the arrows of the strings of the right
hand side of the equality are reversed correspondingly), hence we
depict this local relation as:
\begin{equation}\label{local1a}
\begin{tikzpicture}[>=stealth,baseline=0.8cm]
  \draw [shift={+(4.8,0)}](0,0) -- (1.6,1.6)[-];
  \draw [shift={+(4.8,0)}](1.6,0) -- (0,1.6)[-];
  \draw [shift={+(4.8,0)}](0.8,0) .. controls (0,0.8) .. (0.8,1.6)[-];
  \draw [shift={+(5,0)}](2,0.8) node {=};
  \draw [shift={+(5.2,0)}](2.4,0) -- (4,1.6)[-];
  \draw [shift={+(5.2,0)}](4,0) node{\qquad ,} -- (2.4,1.6)[-];
  \draw [shift={+(5.2,0)}](3.2,0) .. controls (4,0.8) .. (3.2,1.6)[-];
\end{tikzpicture}
\end{equation}
and first relation of (\ref{local1}) is satisfied when both of the arrows of the strings
in the diagram are reversed.
We define the degrees of the planar string diagrams (as morphisms from
$X_1X_2\cdots X_k\to Y_1Y_2\cdots Y_l$ with $k\ge 0, l\ge 0$ integers,
$X_i,Y_j\in\{\sP,\sQ\}$, and $X_1X_2\cdots X_k={\mathbf 1}$ if $k=0$) from the degrees of
the simplest planar string diagrams:
\[
\begin{split}
&\begin{tikzpicture}[>=stealth]
\draw  (-.5,.5) node {$deg$};
\draw [->](0,0) -- (1,1);
\draw [->](1,0) -- (0,1);
\draw [shift={+(2,0)}] (-0.5,0.5) node {$=$};
\draw [shift={+(2.5,0)}] (0,.5) node {$deg$};
\draw [shift={+(3.0,0)}][<-](0,0) -- (1,1);
\draw [shift={+(3.0,0)}][<-](1,0) -- (0,1);
\draw [shift={+(5,0)}] (-0.5,0.5) node {$=$};
\draw [shift={+(5.5,0)}] (0,.5) node {$deg$};
\draw [shift={+(6.0,0)}][->](0,0) -- (1,1);
\draw [shift={+(6.0,0)}][<-](1,0) -- (0,1);
\draw [shift={+(8,0)}] (-0.5,0.5) node {$=$};
\draw [shift={+(8.5,0)}] (0,.5) node {$deg$};
\draw [shift={+(9.0,0)}][<-](0,0) -- (1,1);
\draw [shift={+(9.0,0)}][->](1,0) -- (0,1);
\draw [shift={+(9.0,0)}](1.5,.5) node{$ \quad = 0 \quad ;$};
\end{tikzpicture}\\
&\begin{tikzpicture}[>=stealth]
\draw  (-.5,-.25) node {$deg$};
\draw (0,0) arc (180:360:.5)[->] ;
\draw (1.75,-.25) node{$ \qquad =-1 \,\, ,\quad $};
\draw (3,-.25) node{$deg$};
\draw (4.5,-.5) arc (0:180:.5) [->];
\draw (5,-.25) node{$ \quad =0 \quad ;$};
\end{tikzpicture}\\
&\begin{tikzpicture}[>=stealth]
\draw  (-.5,-.25) node {$deg$};
\draw (0,0) arc (180:360:.5)[<-] ;
\draw (1.75,-.25) node{$ \quad = 0 \quad , \quad $};
\draw (3,-.25) node{$deg$};
\draw (4.5,-.5) arc (0:180:.5) [<-];
\draw (5,-.25) node{$ \quad = 1 \quad ; $};
\end{tikzpicture}
\end{split}
\]
and the degree of a dot labeled by $b$ equals the degree of $b$ in the graded algebra $\Lambda^*(V)$.
If $f$ and $g$ are two planar strings, and they can be composed as $g\circ f\ne 0$, then we
define $deg(g\circ f):= deg(f) + deg(g)$. The degree of a string diagram is defined to be
the total sum of the degrees of all the planar strings in the string diagram. And finally,
if $f=\sum_i c_i f_i$ with each $f_i\ne 0$ a string diagram and $0\ne c_i\in\C$, then we define
\[
deg(f)=\max_i\{deg(f_i)\}.
\]
and the degree of a dot labeled by $b$ equals the degree of $b$ in the graded algebra $\Lambda^*(V)$. Thus for any two 1-morphisms $A,\ B$, the vector
space ${\rm Hom}_{'\H'}(A,B)$ of 2-morphisms
$A\rightarrow B$ is graded, and the composition of 2-morphisms is compatible with grading.

The $2$-morphisms of $'\H'$ constitute a ring with the obvious addition and the multiplication given
by the composition of planar string diagrams. We define
\[
f\cdot g=0
\]
if the two planar strings $f$ and $g$ can not be composed.

For an $f\in M={\rm Hom}_{'\H'}(X\lba l_1\rba,Y\lba l_2\rba)$, where
$X=X_1X_2\cdots X_{k_1}$ and $Y=Y_1Y_2\cdots Y_{k_2}$ with $X_i, Y_j\in \{\sP,\sQ\}$
(that is, $f$ is a planar string which connects the sequence of $X_i$'s on the
bottom and the sequence of $X_i$'s on the top), we also
view $f\in N={\rm Hom}_{'\H'}(X, Y)$. We then define
\[
deg_{M}(f)=deg_{N}(f)+l_1-l_2=deg(f)+l_1-l_2.
\]

In addition, we define the shifted degree (which is denoted by $sdeg$) of the planar diagrams
(also viewed as $2$-morphisms from $X_1X_2\cdots X_k\to Y_1Y_2\cdots Y_l$ with
$X_i,Y_j\in \{\sP,\sQ\}$):
\[
\begin{split}
&\begin{tikzpicture}[>=stealth]
\draw  (-.5,.5) node {$sdeg$};
\draw [->](0,0) -- (1,1);
\draw [->](1,0) -- (0,1);
\draw [shift={+(2,0)}] (-0.5,0.5) node {$=$};
\draw [shift={+(2.5,0)}] (0,.5) node {$sdeg$};
\draw [shift={+(3.0,0)}][<-](0,0) -- (1,1);
\draw [shift={+(3.0,0)}][<-](1,0) -- (0,1);
\draw [shift={+(5,0)}] (-0.5,0.5) node {$=$};
\draw [shift={+(5.5,0)}] (0,.5) node {$sdeg$};
\draw [shift={+(6.0,0)}][->](0,0) -- (1,1);
\draw [shift={+(6.0,0)}][<-](1,0) -- (0,1);
\draw [shift={+(8,0)}] (-0.5,0.5) node {$=$};
\draw [shift={+(8.5,0)}] (0,.5) node {$sdeg$};
\draw [shift={+(9.0,0)}][<-](0,0) -- (1,1);
\draw [shift={+(9.0,0)}][->](1,0) -- (0,1);
\draw [shift={+(9.0,0)}](1.5,.5) node{$ \quad = 0 \quad ;$};
\end{tikzpicture}\\
&\begin{tikzpicture}[>=stealth]
\draw  (-.5,-.25) node {$sdeg$};
\draw (0,0) arc (180:360:.5)[->] ;
\draw (1.75,-.25) node{$ \qquad =-1 \,\, ,\quad $};
\draw (3,-.25) node{$sdeg$};
\draw (4.5,-.5) arc (0:180:.5) [->];
\draw (5,-.25) node{$ \quad =0 \quad ;$};
\end{tikzpicture}\\
&\begin{tikzpicture}[>=stealth]
\draw  (-.5,-.25) node {$sdeg$};
\draw (0,0) arc (180:360:.5)[<-] ;
\draw (1.75,-.25) node{$ \quad = 0 \quad , \quad $};
\draw (3,-.25) node{$sdeg$};
\draw (4.5,-.5) arc (0:180:.5) [<-];
\draw (5,-.25) node{$ \quad = 1 \quad ; $};
\end{tikzpicture}
\end{split}
\]
and the shifted degree $sdeg(b)$ of a dot labeled by $b \in \Lambda^*(V)$ is
the same as $deg(b)$ defined above. By convention, we set $deg(0)=sdeg(0)=+\infty$.

The local relations (\ref{local2}), (\ref{local3}), and (\ref{local4}) imply that
$\sQ\sP\cong \sP\sQ\oplus{\mathbf 1}\oplus {\mathbf 1}\lba 1\rba$, which is illustrated
in the category $'\H'$ as:
\begin{equation}\label{e:qpdecomp}
\xy
(0,20)*+ {\sQ\sP}="T1";
(-25,0)*+ {\sP\sQ}="m1";
(0,0)*+ {\mathbf 1}="m2";
(25,0)*+{{\mathbf 1}\lba 1\rba \,\, .}="m3";
(0, -20)*+{\sQ\sP}="T2";
 {\ar@/_0pc/_{ \begin{tikzpicture} \draw [<-] (0,0) -- (0.4,0.4);
 \draw [->] (0.4,0) -- (0,0.4);\end{tikzpicture}
  \xy
 \endxy}
  "T1";"m1"};
 {\ar@/_0.0pc/_{ \begin{tikzpicture} \draw (0.3,-.5) arc (-15:195:.25) [->];\end{tikzpicture} \xy
  \endxy} "T1";"m2"};
 {\ar@/^0pc/^{ \begin{tikzpicture} \draw (0.3,-.5) arc (-15:195:.25) [->];
  \filldraw (0.175,-0.225) circle (1.5pt); \draw (0.29,-0.1) node{$v$}; \end{tikzpicture}
  \xy
  \endxy} "T1";"m3"};
 {\ar@/_0pc/_{  \begin{tikzpicture} \draw [->] (0,0) -- (0.4,0.4);
 \draw [<-] (0.4,0) -- (0,0.4);\end{tikzpicture}
  \xy
  \endxy} "m1";"T2"};
 {\ar@/_0.0pc/_{ \begin{tikzpicture} \draw [shift={+(0,0.3)}](0,0) arc (180:360:.25)[->];
  \filldraw (0.125,0.105) circle (1.5pt) +(0,0.25) node{$v$}; \end{tikzpicture}
 \xy
    \endxy} "m2";"T2"};
 {\ar@/^0pc/^{ \begin{tikzpicture} \draw [shift={+(0,0.3)}](0,0) arc (180:360:.25)[->];\end{tikzpicture}
  \xy
    \endxy} "m3";"T2"};
\endxy
\end{equation}

By (\ref{local1}), we see that upward oriented crossings satisfy the relations of symmetric group
$S_n$. We have a canonical homomorphism
\[
{\C}[S_n] \longrightarrow {\End}_{'\H'}(\sP^n)
\]
from the group algebra of $S_n$ to the endomorphism ring of the $n$-th tensor power of $P$.
Through some diagram manipulation according to $2$-morphisms of $'\H'$, we can see that the
relation (\ref{local1}) is also satisfied by the downward oriented crossings, therefore
get the similar canonical homomorphism
\[
{\C}[S_n] \longrightarrow {\End}_{'\H'}(\sQ^n).
\]
For a positive integer $n$ and a partition $\lambda$ of $n$, let $e_{\lambda}$ be the associated Young
symmetrizer of ${\C}[S_n]$. The Young symmetrizers are idempotents of the group algebra ${\C}[S_n]$ given
by
\[
e_{\lambda}=a_{\lambda} b_{\lambda}/{n_{\lambda}}
\]
where $n_{\lambda}={n!/{\rm dim}(V_{\lambda})}$ with $V_{\lambda}$ the irreducible representation
corresponding to the partition $\lambda$, the elements
$a_{\lambda}, b_{\lambda}\in {\C}[S_n]$ are defined for a Young tableau $T$ of the partition
$\lambda$ by
\[
a_{\lambda}=\sum\limits_{g\in L_{\lambda}} g, \qquad b_{\lambda}=\sum\limits_{g\in L'_{\lambda}} sign(g)g,
\]
and
\[
\begin{split}
&L_{\lambda}=\{g\in S_n: \, g\text{ preserves each row of $T$}\}, \\
&L'_{\lambda}=\{g\in S_n: \, g\text{ preserves each column of $T$}\}.
\end{split}
\]
The irreducible representations of $S_n$ are given by ${\C}[S_n]e_{\lambda}$ for partitions $\lambda$ of
$n$. Let $e_{(n)}$ be the idempotent corresponding to the partition $(n)$ of $n$, with
${\C}[S_n]e_{(n)}$ being the trivial representation of $S_n$. See \cite{FH} for detail.

Let $'\H$ be the Karoubi envelope of $'\H'$. This is a $2$-category described in the following.
\begin{itemize}
\item {\bf Objects of $'\H$:} same as the objects of $'\H'$.
\item {\bf $1$-morphisms of $'\H$:} pairs $(M,e)$, where $M$ is a $1$-morphism of $'\H'$, and
$e$ is an idempotent $2$-morphism $M\xrightarrow{e} M$ of $'\H'$.
\item {\bf $2$-morphisms of $'\H$:} given two $1$-morphisms $(M, e)$ and $(M',e')$, the set
${\rm Hom}((M,e), (M',e'))$ consists of the $2$-morphisms $f:M\to M'$ of $'\H'$ such that
the diagram
\[
\xymatrix{M\ar[r]^{f}\ar[d]_{e} & M'\ar[d]^{e'}\\
M\ar[r]_{f} & M'}.
\]
\end{itemize}

The $2$-morphisms of $'\H$ constitute a ring, as so do the $2$-morphisms of $'\H'$.

For convenience, we also view $'\H'$ and $'\H$ as categories with the $1$-morphisms considered
as the objects and the $2$-morphisms as the morphisms (hence forgetting the set $\Z$ of $0$-cells).
Then $'\H'$ is a subcategory of $'\H$, and both are $\C$-linear additive monoidal categories.
For an idempotent $e:M\to M$ in $'\H'$ with $M$ a $1$-morphism, ${\id}_{M}-e$ is also an
idempotent, and we have a direct sum decomposition in $'\H$:
\[
M=(M,e)\oplus (M,{\id}_{M}-e).
\]

The 1-morphisms $\sP^n$ and $\sQ^n$
splits into direct sums of $(\sP^n,e_{\lambda})$ and $(\sQ^n,e_{\lambda})$ in $'\H$ respectively, over
partitions $\lambda$ of $n$. Define
\[
\sP_{\lambda}=(\sP^n,e_{\lambda}), \qquad \sQ_{\lambda}=(\sQ^n,e_{\lambda}), \qquad
\sP_n = (\sP^n, e_{(n)}), \qquad  \sQ_n=(\sQ^n, e_{(n)}).
\]

\begin{theorem}[\cite{CL}, \cite{WWY}]\label{theorem:1}
In the $2$-category $'\mathcal{H}$, the $1$-morphisms $\sP_n$ and $\sQ_n$ satisfy
\begin{alignat}{5}
\sQ_n \sP_m  &\cong &&\,\,\bigoplus_{k\geq 0}\bigoplus_{l=0}^k \sP_{m-k}\sQ_{n-k}\lba l\rba,\label{$1}\\
\sQ_n \sQ_m  &\cong &&\,\,\sQ_m \sQ_n,\label{$2}\\
\sP_n \sP_m  &\cong &&\,\,\sP_m \sP_n.\label{$3}
\end{alignat}
\end{theorem}

From Theorem \ref{theorem:1}, we get a natural homomorphism of algebras
\begin{equation}\label{e:pi}
\pi:\,\, 'H_{\Zt} \longrightarrow K_0('\H)
\end{equation}
by sending $p_m$ to $[\sP_m]$, $q_n$ to $[\sQ_n]$, and $t^l$ to $[{\mathbf 1}\lba l\rba]$, where
$[X]$ denotes the isomorphism class of the $1$-morphism $X$ of $'\H$.

\begin{theorem}[\cite{WWY}]\label{theorem:1.5}
$\pi$ is an isomorphism of algebras.
\end{theorem}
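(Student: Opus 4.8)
\emph{Strategy.} The plan is to prove that $\pi$ is both surjective and injective; the paragraph preceding the statement already shows $\pi$ is a well-defined homomorphism of algebras, since by Theorem \ref{theorem:1} the classes $[\sP_m]$, $[\sQ_n]$ and $t=[{\mathbf 1}\lba 1\rba]$ satisfy the defining relations (\ref{e:1.8})--(\ref{e:1.10}) of $'H_{\Zt}$. For surjectivity, note that $K_0('\H)$ is generated over $\Zt$ by the classes of indecomposable $1$-morphisms. By the structure theory of $'\H$ developed in \cite{CL} and \cite{WWY}, every indecomposable is, up to a grading shift, of the form $\sP_\lambda\sQ_\mu$ for a pair of partitions $(\lambda,\mu)$, and the first isomorphism of Theorem \ref{theorem:1} lets one rewrite any composite so that the $\sP$-factors precede the $\sQ$-factors. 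Since the classes $[\sP_\lambda]$ and $[\sQ_\mu]$ are integral polynomials in the $[\sP_m]$, respectively the $[\sQ_n]$, by the Jacobi--Trudi formulas, every generator of $K_0('\H)$ lies in the $\Zt$-subalgebra generated by $[\sP_m]$ and $[\sQ_n]$, which is exactly $\mathrm{im}(\pi)$. Hence $\pi$ is surjective.

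\emph{Injectivity, algebraic side.} This is the heart of the matter, and I would prove it by means of a faithful action on a $t$-deformed Fock space. Writing $\sp_\lambda=\sp_{\lambda_1}\cdots\sp_{\lambda_r}$ and $\sq_\mu=\sq_{\mu_1}\cdots\sq_{\mu_s}$ for partitions $\lambda,\mu$, the straightening rule (\ref{e:1.8}) together with (\ref{e:1.9}) and (\ref{e:1.10}) shows that the normally ordered monomials $\{\sp_\lambda\sq_\mu\}$ span $'H_{\Zt}$ over $\Zt$. Let $\mathcal F=\Zt[h_1,h_2,\dots]$ be the Fock module, on which $\sp_m$ acts by multiplication and $\sq_n$ by the $t$-deformed annihilation operators determined by (\ref{e:arelation}); evaluating $\sp_\lambda\sq_\mu$ against the vectors $\sp_\nu\cdot 1$ yields a triangular pairing whose diagonal entries are products of quantum integers, hence nonzero in the domain $\Zt$. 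This simultaneously proves that the $\sp_\lambda\sq_\mu$ are $\Zt$-linearly independent, so form a $\Zt$-basis of $'H_{\Zt}$, and that the representation $\rho\colon{}'H_{\Zt}\to\End_\Zt(\mathcal F)$ is faithful.

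\emph{Injectivity, categorical side.} Finally I would invoke the categorification of $\mathcal F$ from \cite{WWY}: a family of graded algebras $A_n$, built from $\Lambda^*(V)$ and $\C[S_n]$, whose graded module categories assemble into $\bigoplus_n A_n\dgmod$ with $K_0\bigl(\bigoplus_n A_n\dgmod\bigr)\cong\mathcal F$ as $\Zt$-modules, together with induction and restriction functors defining an action of $'\H$ on this category that categorifies $\rho$. Passing to Grothendieck groups gives a $\Zt$-linear map $K_0('\H)\to\End_\Zt(\mathcal F)$ whose composite with $\pi$ equals $\rho$. As $\rho$ is faithful, any element of $\ker\pi$ acts as $0$ on $\mathcal F$ and is therefore $0$; thus $\pi$ is injective, and with surjectivity it is an isomorphism.

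\emph{Main obstacle.} The decisive technical work is the construction and verification of this categorified Fock space, namely defining the $A_n$, identifying $K_0\bigl(\bigoplus_n A_n\dgmod\bigr)$ with the free $\Zt$-module $\mathcal F$, and checking that induction and restriction reproduce on the nose the operators $\rho(\sp_m)$ and $\rho(\sq_n)$ — in particular that the shift $[1]$ categorifies multiplication by $t$ and recovers the quantum integers $[k+1]$ appearing in (\ref{e:1.8}). The integral (rather than merely generic) faithfulness of $\rho$ over $\Zt$ is the second delicate point, and it is what makes the triangular pairing above genuinely conclusive. Both ingredients are supplied by \cite{WWY}, and the argument above packages them into the isomorphism claim.
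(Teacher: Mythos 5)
Your overall strategy is the right one, and it is essentially the route of \cite{WWY}, which is all the paper itself offers for Theorem \ref{theorem:1.5} (the theorem is quoted, not proved; the closest thing to an in-paper proof is the same machinery run for the undeformed case in Section \ref{section:4}, plus Remark \ref{remark:proofmain}). Your injectivity argument --- a categorified Fock space whose decategorified action is faithful over $\Zt$, so that anything in $\ker\pi$ acts by zero and hence vanishes --- is exactly the method recalled in Remark \ref{remark:proofmain}. Your Jacobi--Trudi step is likewise the same as the paper's appeal, in Proposition \ref{prop:3}, to the fact that inductions of trivial representations generate all virtual representations of $S_n$. The well-definedness of $\pi$ via Theorem \ref{theorem:1}, and the use of non-vanishing of quantum integers in $\Zt$ to get an honest triangular pairing, are also sound.

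The one genuine weak point is the surjectivity step where you assert that ``every indecomposable $1$-morphism of $'\H$ is, up to a grading shift, of the form $\sP_\lambda\sQ_\mu$,'' citing structure theory from \cite{CL} and \cite{WWY}. That assertion is not a known input one may quote; it is the crux of the theorem itself. The endomorphism algebras $\End_{'\H'}(\sP^m\sQ^n)$ are strictly larger than $\C[S_m]\otimes\C[S_n]$ --- they contain dots and clockwise cups and caps --- so a priori they could contain primitive idempotents not equivalent to Young symmetrizers, i.e.\ exotic indecomposable summands; ruling these out is precisely what was open in the undeformed case (Khovanov's conjecture). Note also that quoting \cite{WWY} here is close to circular, since Theorem \ref{theorem:1.5} is the statement from \cite{WWY} whose proof that structure theory constitutes. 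The paper's route never classifies indecomposables at all: it works directly with $K_0$ of the endomorphism rings, using the fact that all the extra generators have positive shifted degree (Proposition \ref{prop:2}) to build the filtration (\ref{filtration}) and the split exact sequence (\ref{splitexact}), then applying Lemma \ref{lemma:1} to get $K_0(R_{\m,\n})\cong K_0(R^0_{\m,\n})$, and Morita invariance to land in $\bigoplus K_0(\C[S_m]\otimes\C[S_n])$ (Corollary \ref{cor:2}). Your proof becomes complete if you replace the appeal to ``structure theory'' by this filtration argument, which is available in the deformed category exactly because the deformation puts the problematic generators in positive degree; as written, the step hides the main technical content of the theorem.
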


\section{Khovanov's Construction of Heinsenberg categorification}\label{section:3}
In this section, we introduce Khovanov's construction \cite{K} of Heisenberg categorification $\H$
of the integral form $H_{\Z}$ of the classical Heisenberg algebra. Recall that $H_{\Z}$ is
the associative algebra generated by $p_n$, $q_n$ for $n\ge 1$ with the relations (\ref{e:1.4})--(\ref{e:1.6}),
hence the sequence $\{a_n\}_{n\in\Z}$ satisfies Equation (\ref{e:1.1}).

Let
\[
\sum_{m\ge 0} {\tp}_m z^m=\exp\left(\sum_{m\ge 1}(-1)^{m-1}\frac{a_{-m}}{m} z^m\right).
\]
Then ${\tp}_0=1$ and
\[
\left(1+\sum_{m\ge 1}{\tp}_m z^m\right)\cdot \left(1+\sum_{m\ge 1}(-1)^m p_m z^m\right)=1.
\]
The symbols $\tp_n, q_n$, $n\ge 1$ satisfy
\begin{align}
& q_n\tp_m = \tp_m q_n + \tp_{m-1} q_{n-1}, \tag{\ref{e:1.4}$'$}\label{e:1.4pr}\\
& \tp_n\tp_m = \tp_m \tp_n. \tag{\ref{e:1.6}$'$}\label{e:1.6pr}
\end{align}

We have a $2$-category $\H'$ consisting of
\begin{itemize}
\item {\bf objects:} all rational integers;
\item {\bf $1$-morphisms:} the symbols generated by $\mathcal P$ and $\mathcal Q$ through finite
compositions and finite direct sums, where ${\mathcal P}: n\longmapsto n+1$ and ${\mathcal Q}:
n+1\longmapsto n$ are two chosen $1$-morphisms;
\item {\bf $2$-morphisms:} a $\C$-vector space generated by the planar string diagrams
in the strip ${\R}\times [0,1]$ modulo relative to boundary isotopies and local relations
(\ref{local1}), (\ref{local2.1}), (\ref{local3}), and (\ref{local4.1}),
\end{itemize}
where the local relations (\ref{local2.1}) and (\ref{local4.1}) are:
\begin{align}\label{local2.1}\tag{\ref{local2}$'$}
&\begin{tikzpicture}[>=stealth,baseline=0.8cm]
\draw (0,0) .. controls (0.8,0.8) .. (0,1.6)[<-];
\draw (0.8,0) .. controls (0,0.8) .. (0.8,1.6)[->];
\draw [shift={+(0.5,0)}](1.2,0.8) node {=};
\draw [shift={+(1,0)}](1.6,0) --(1.6,1.6)[<-];
\draw [shift={+(1,0)}](2.4,0) -- (2.4,1.6)[->];
\draw [shift={+(1.5,0)}](2.8,0.8) node {$-$};
\draw [shift={+(2,0)}](3.2,1.4) arc (180:360:.4);
\draw [shift={+(2,0)}](3.2,1.6) -- (3.2,1.4) ;
\draw [shift={+(2,0)}](4,1.6) -- (4,1.4) [<-];
\draw [shift={+(2,0)}](4,.2) arc (0:180:.4);
\draw [shift={+(2,0)}](4,0) -- (4,.2) ;
\draw [shift={+(2,0)}](3.2,0) -- (3.2,.2) [<-];
\draw [shift={+(2.1,0)}](4.1,0) node{;};
\end{tikzpicture}\\
\label{local4.1}\tag{\ref{local4}$'$}
&\begin{tikzpicture}[>=stealth,baseline=0cm]
\draw [shift={+(0,0)}](0,0) arc (180:360:0.5cm) ;
\draw [shift={+(0,0)}][->](1,0) arc (0:180:0.5cm) ;
\draw [shift={+(0.2,0)}](1.75,0) node{$=$};
\draw [shift={+(1,0)}] (2,0) node{$1\,\,\,.$};
\end{tikzpicture}
\end{align}
The $2$-morphisms in $\H'$ constitute of a ring with the same addition and multiplication
rules as in $'\H'$ except the local relations stated as above. The local relations
(\ref{local2.1}), (\ref{local3}), and (\ref{local4.1}) of $\H'$ imply
$\mcQ\mcP\cong \mcP\mcQ\oplus {\mathbf 1}$, which is illustrated in the category $\H'$ as:
\begin{equation*}
\xy
(0,20)*+ {\mcQ\mcP}="T1";
(-25,0)*+ {\mcP\mcQ}="m1";
(25,0)*+{{\mathbf 1} \,\, .}="m3";
(0, -20)*+{\mcQ\mcP}="T2";
 {\ar@/_0pc/_{ \begin{tikzpicture} \draw [<-] (0,0) -- (0.4,0.4);
 \draw [->] (0.4,0) -- (0,0.4);\end{tikzpicture}
  \xy
 \endxy}
  "T1";"m1"};
 {\ar@/^0pc/^{ \begin{tikzpicture} \draw (0.3,-.5) arc (-15:195:.25) [->]; \end{tikzpicture}
  \xy
  \endxy} "T1";"m3"};
 {\ar@/_0pc/_{  \begin{tikzpicture} \draw [->] (0,0) -- (0.4,0.4);
 \draw [<-] (0.4,0) -- (0,0.4);\end{tikzpicture}
  \xy
  \endxy} "m1";"T2"};
 {\ar@/^0pc/^{ \begin{tikzpicture} \draw [shift={+(0,0.3)}](0,0) arc (180:360:.25)[->];\end{tikzpicture}
  \xy
    \endxy} "m3";"T2"};
\endxy
\end{equation*}

Let $\H$ be the Karoubi envelope of $\H'$.
It is clear that ${\C}[S_n]\subset {\End}(\mcP^n)$, and ${\C}[S_n]\subset {\End}(\mcQ^n)$,
hence we define the following $1$-morphisms in $\H$ from Young symmetrizers $e_{\lambda}$
\[
{\mcP}_{\lambda}=(\mcP^n, e_{\lambda}),\qquad {\mcQ}_{\lambda}=(\mcQ^n, e_{\lambda}), \qquad
{\mcP}_n=(\mcP^n, e_{(n)}), \qquad {\mcQ}_n=(\mcQ^n, e_{(n)}), \qquad
{\mcPb}_n=(\mcPb^n, e_{(1^n)}),
\]
where $\lambda$ denotes a partition of the positive integer $n$ and
$e_{(1^n)}$ denotes the  anti-symmetrizer of $S_n$. Then we have
\begin{theorem}[\cite{K},\cite{CL},\cite{WWY}]\label{theorem:3}
In the $2$-category $\H$, there are isomorphisms between $1$-morphisms:
\begin{alignat}{3}
{\mcQ}_n \, {\mcPb}_m & \cong  {\mcPb}_m\, {\mcQ}_n \oplus {\mcPb}_{m-1}\, {\mcQ}_{n-1}\, ,
\label{e:21}\\
{\mcQ}_n \, {\mcP}_m & \cong  {\mcP}_m\, {\mcQ}_n \oplus {\mcP}_{m-1}\, {\mcQ}_{n-1}
\oplus\mcP_{m-2}\, \mcQ_{n-2}\oplus \cdots\,,
\label{e:22}\\
{\mcQ}_n \, {\mcQ}_m & \cong  {\mcQ}_m\, {\mcQ}_n\,,
\label{e:23}\\
{\mcPb}_n\, \mcPb_m & \cong  \mcPb_m \, \mcPb_n\,, \label{e:24}\\
{\mcP}_n\, \mcP_m & \cong  \mcP_m \, \mcP_n\,. \label{e:25}
\end{alignat}
\end{theorem}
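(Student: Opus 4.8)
The plan is to derive every isomorphism from the single decomposition $\mcQ\mcP\cong\mcP\mcQ\oplus\mathbf 1$ that the local relations \eqref{local2.1}, \eqref{local3} and \eqref{local4.1} already force in $\H$, used together with the biadjunction between $\mcP$ and $\mcQ$ (the cups and caps furnishing units and counits) and the two commuting homomorphisms $\C[S_n]\to\End(\mcP^n)$, $\C[S_n]\to\End(\mcQ^n)$ supplied by \eqref{local1}. The three commutativity statements \eqref{e:23}, \eqref{e:24} and \eqref{e:25} are the routine cases: writing $\mcP_n\mcP_m=(\mcP^{n+m},e_{(n)}\otimes e_{(m)})$ and $\mcP_m\mcP_n=(\mcP^{n+m},e_{(m)}\otimes e_{(n)})$, the two idempotents are conjugate in $\C[S_{n+m}]$ by the block transposition exchanging the two factors, and the crossing $2$-morphism realises this conjugation as an isomorphism in the Karoubi envelope $\H$; the identical argument with the antisymmetrizer $e_{(1^n)}$, or with $\mcQ$ in place of $\mcP$, settles \eqref{e:24} and \eqref{e:23}. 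The real content therefore lies in the two commutator relations \eqref{e:21} and \eqref{e:22}.

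For these I would first iterate $\mcQ\mcP\cong\mcP\mcQ\oplus\mathbf 1$ to move all $n$ down-strands to the right of all $m$ up-strands. Each elementary move either crosses a $\mcQ$ past a $\mcP$ (the $\mcP\mcQ$ summand) or caps them off (the $\mathbf 1$ summand), so $\mcQ^n\mcP^m$ decomposes as a direct sum indexed by the partial matchings of the $m$ up-strands with the $n$ down-strands, the block with exactly $k$ caps being a sum of copies of $\mcP^{m-k}\mcQ^{n-k}$ permuted by the residual $S_m\times S_n$-action. To obtain \eqref{e:22} I cut this down by the symmetrizer $e_{(n)}\otimes e_{(m)}$: for each $k$ the $\binom{m}{k}\binom{n}{k}k!$ matchings with $k$ caps are all identified by symmetrizing on both sides, leaving precisely one copy of $\mcP_{m-k}\mcQ_{n-k}$, which is the asserted multiplicity-one decomposition. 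To obtain \eqref{e:21} I cut down instead by $e_{(n)}\otimes e_{(1^m)}$; here any pair of caps is forced to be antisymmetric by $e_{(1^m)}$ on the $\mcP$-strands and symmetric by $e_{(n)}$ on the $\mcQ$-strands, hence vanishes, so every block with $k\ge 2$ dies and only $k=0,1$ survive, yielding $\mcPb_m\mcQ_n\oplus\mcPb_{m-1}\mcQ_{n-1}$.

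I would run this bookkeeping in lockstep with the graded computation already carried out for $'\H$ in Theorem \ref{theorem:1} and in \cite{WWY}, \cite{CL}, where the same matching combinatorics appears: Khovanov's relations \eqref{local2.1} and \eqref{local4.1} contribute a single undotted correction term in place of the two $v$-decorated terms of \eqref{local2}, \eqref{local4}, which is exactly what turns the graded multiplicity $[k+1]=1+t+\cdots+t^k$ of Theorem \ref{theorem:1} into the single copy seen in \eqref{e:22}. At the level of Grothendieck rings this is the consistency check $[k+1]\big|_{t=0}=1$, matching the passage from $'H_{\Zt}$ to $H_{\Z}$. To make the isomorphisms explicit rather than merely numerical, I would exhibit the inclusion and projection $2$-morphisms directly---nested cups and caps post-composed with the relevant Young (anti)symmetrizers---and check, using only \eqref{local1}--\eqref{local4.1}, that they compose to the identity on each summand and to zero between distinct summands.

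The hard part is the equivariant bookkeeping of the middle step: verifying that after applying $e_{(n)}\otimes e_{(m)}$ (respectively $e_{(n)}\otimes e_{(1^m)}$) each surviving block collapses to multiplicity one (respectively to the two terms $k=0,1$) with no residual crossings linking distinct blocks, or equivalently that the explicit intertwiners are mutually inverse. This reduces to a finite but delicate diagram calculation---evaluating closed circles to scalars by \eqref{local4.1} and resolving curls by \eqref{local2.1}---and it is precisely the replacement of the two-term relations of $'\H$ by the single-term Khovanov relations that keeps this calculation tractable and pins down the stated multiplicities.
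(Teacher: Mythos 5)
Your proposal is correct and takes essentially the same approach as the paper, whose own proof is purely a citation: \eqref{e:21}, \eqref{e:23}, \eqref{e:24} to \cite{K} (as categorified versions of the algebra relations \eqref{e:1.4pr}, \eqref{e:1.5}, \eqref{e:1.6pr}), and \eqref{e:22}, \eqref{e:25} to \cite{CL}, \cite{WWY}. The cited proofs run exactly your route---iterating $\mcQ\mcP\cong\mcP\mcQ\oplus{\mathbf 1}$, conjugating block idempotents by crossing $2$-morphisms for the commutativity relations, and cutting down by Young (anti)symmetrizers, with the $k\ge 2$ blocks of \eqref{e:21} killed by the symmetric/antisymmetric clash---so your sketch simply fleshes out the diagrammatic bookkeeping the paper delegates to its references.
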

\begin{proof}
Equations (\ref{e:21}), (\ref{e:23}), and (\ref{e:24}) are proved in \cite{K}. They correspond
to Equations (\ref{e:1.4pr}), (\ref{e:1.5}), and (\ref{e:1.6pr}). Equation (\ref{e:22}) and (\ref{e:25}) can
be proved in the same way as explained in \cite{CL} or \cite{WWY}.
\end{proof}

Due to the relations in Theorem \ref{theorem:3}, there exists a ring homomorphism
\begin{equation}\label{e:gamma}
\gamma: H_{\Z} \lra K_0(\H)
\end{equation}
such that
\begin{equation*}
\gamma(p_n)=[\mcP_n], \qquad \gamma(q_n)=[\mcQ_n]
\end{equation*}
which imply that $\gamma(\tp_n)=[\mcPb_n]$.
\begin{theorem}[\cite{K}]\label{theorem:4}
The ring homomorphism $\gamma$ is injective.
\end{theorem}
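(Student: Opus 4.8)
The plan is to prove the injectivity of $\gamma$ by exhibiting a faithful representation of $H_{\Z}$ that factors through it, following Khovanov's original strategy. Concretely, I would construct a categorical action of $\H$ on the direct sum $\bigoplus_{n\ge 0}\C[S_n]\dmod$ of module categories over the symmetric group algebras, and then observe that the induced action on Grothendieck groups is the classical Fock space representation of the Heisenberg algebra, which is known to be faithful. Since a representation factoring through $\gamma$ is faithful only if $\gamma$ itself is injective, this yields the claim.

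First I would build a $2$-functor $\Phi$ from $\H'$ to the $2$-category of $\C$-linear categories, sending the object $n$ to $\C[S_n]\dmod$, the $1$-morphism $\mcP$ to the induction functor $\Ind_{S_n}^{S_{n+1}}$, and $\mcQ$ to the restriction functor $\Res^{S_{n+1}}_{S_n}$. The biadjunction $(\Ind,\Res)$ supplies the unit and counit natural transformations realizing the cup and cap diagrams, the natural $S_n$-action on $\Ind^n$ realizes the upward crossings, and the downward crossings arise dually. One must then check that the images of these natural transformations satisfy the defining local relations of $\H'$, namely (\ref{local1}), (\ref{local2.1}), (\ref{local3}), and (\ref{local4.1}); the essential input is the Mackey formula for $S_n\subset S_{n+1}$, which gives $\Res\,\Ind\cong\id\oplus\Ind\,\Res$ and thereby matches the isomorphism $\mcQ\mcP\cong\mcP\mcQ\oplus\mathbf 1$ that holds in $\H'$. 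Because the target is idempotent-complete, $\Phi$ extends to the Karoubi envelope $\H$, carrying $\mcP_n=(\mcP^n,e_{(n)})$ to induction along the trivial $S_n$-representation and $\mcQ_n=(\mcQ^n,e_{(n)})$ to its adjoint.

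Next I would pass to Grothendieck groups. The group $F:=\bigoplus_{n\ge 0}K_0(\C[S_n]\dmod)$ is the Fock space; identifying $K_0(\C[S_n]\dmod)$ with the degree-$n$ part of the ring of symmetric functions (irreducibles corresponding to Schur functions) realizes $F$ as a polynomial ring in the power sums. Under this identification $\Phi$ induces a ring homomorphism $\rho\colon K_0(\H)\lra\End(F)$, and the composite $\rho\circ\gamma\colon H_{\Z}\lra\End(F)$ is exactly the standard Fock representation, in which $\gamma(p_n)=[\mcP_n]$ acts by multiplication by the complete homogeneous symmetric function $h_n$ and $\gamma(q_n)=[\mcQ_n]$ by the corresponding adjoint differential operator. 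Since this Fock representation of $H_{\Z}$ is faithful, a classical fact equivalent to the nondegeneracy of the Hall pairing, the composite $\rho\circ\gamma$ is injective, and hence $\gamma$ is injective.

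The hard part will be the verification in the second step that the natural transformations produced by the adjunction and by the symmetric-group action genuinely satisfy every local relation of $\H'$, in particular the dotless cap-cup relation (\ref{local2.1}) and the circle-evaluation (\ref{local4.1}); this is where the precise choice of normalizations in Khovanov's diagrammatic calculus must be matched against the concrete unit/counit and Mackey decomposition. Once the categorical action is known to be well defined, the remaining steps are formal, and the faithfulness of the Fock representation is standard.
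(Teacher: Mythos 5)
Your proposal is correct and is essentially the paper's own proof: the paper quotes Theorem \ref{theorem:4} from \cite{K}, and Khovanov's argument there is exactly your construction of a categorical action of $\H'$ on $\bigoplus_{n\ge 0}\C[S_n]\dmod$ by induction and restriction (crossings from the $S_n$-action, cups and caps from the biadjunction, Mackey giving $\mcQ\mcP\cong\mcP\mcQ\oplus\mathbf 1$), decategorified to the Fock space of symmetric functions where $[\mcP_n]$ acts by multiplication by $h_n$, followed by faithfulness of the Fock representation; the paper's Remark \ref{remark:proofmain} sketches the same faithfulness argument via \cite{FJW1} and \cite{FJW2}. The only loose point is your aside that faithfulness is ``equivalent to the nondegeneracy of the Hall pairing''---faithfulness is a separate (though standard) fact, proved for instance by letting a normally ordered combination $\sum c_{\lambda\mu}\,p_\lambda q_\mu$ act on the vacuum and on the $h_\nu$ and inducting, but this does not affect the correctness of your approach.
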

It is also conjectured in \cite{K} that $\gamma$ is a ring isomorphism.

The Karoubi envelope
$\H$ of $\H'$ is determined by the pairs $(M,e)$ for $M$ a $1$-morphism of $\H'$ and
$e\in \EndHp(M)$ an idempotent.
Let $\tilde{\mathcal R}_{\H'}$ be the $\C$ vector space generated by
the planar string diagrams on the strip ${\R}\times [0,1]$ from the lower endpoints
to the upper endpoints, modulo relative to boundary isotopies. Then $\tilde{\mcR}_{\H'}$ is a ring with
the addition and the multiplication mentioned above. It is generated by the following
diagrams
\begin{equation}\label{e:26}
\begin{split}
&\id_{\mcP}=\begin{tikzpicture}[baseline=0.25cm]\draw[->] (0,0) -- (0,0.7);\end{tikzpicture}\,\,\,\,,\qquad
\id_{\mcQ}=\begin{tikzpicture}[baseline=0.25cm]\draw[<-] (0,0) -- (0,0.7);\end{tikzpicture}\,\,\,\,,\qquad
\varpi_1=\begin{tikzpicture}[baseline=0.25cm]\draw [<-](0,0) -- (0.7,0.7);
\draw [->](0.7,0) -- (0,0.7);\end{tikzpicture}\,\,\,,\qquad
\varphi_1=\begin{tikzpicture}[baseline=0.25cm]\draw [->](0,0) -- (0.7,0.7);
\draw [<-](0.7,0) -- (0,0.7);\end{tikzpicture}\,\,\,,\qquad
\vartheta_1=\begin{tikzpicture}[baseline=0.25cm]\draw [->](0,0) -- (0.7,0.7);
\draw [->](0.7,0) -- (0,0.7);\end{tikzpicture}\,\,\,,\qquad
\vartheta_2=\begin{tikzpicture}[baseline=0.25cm]\draw [<-](0,0) -- (0.7,0.7);
\draw [<-](0.7,0) -- (0,0.7);\end{tikzpicture}\,\,\,,\qquad \\
&\varpi_2=\begin{tikzpicture}[baseline=0.5cm]\draw (4.7,.35) arc (0:180:.5) [->]; \end{tikzpicture}\,\,\,,\qquad
\varphi_2=\begin{tikzpicture}[baseline=0.5cm]\draw (7.5,.85) arc (180:360:.5) [->];
\end{tikzpicture}\,\,\,,\qquad
\vartheta_3=\begin{tikzpicture}[baseline=0.5cm]\draw (4.7,.35) arc (0:180:.5) [<-]; \end{tikzpicture}\,\,\,,\qquad
\vartheta_4=\begin{tikzpicture}[baseline=0.5cm]\draw (7.5,.85) arc (180:360:.5) [<-];
\end{tikzpicture}\,\,\,.\qquad
\end{split}
\end{equation}
through compositions and $\C$-linear extensions on $1$-morphisms of $\H'$, which we mean that,
for any $1$-morphisms $X=X_1X_2\cdots X_k$ and $Y=Y_1Y_2\cdots Y_l$ with
$X_i, Y_j\in \{{\mathbf 1},\mcP,\mcQ\}$, the $\C$-vector space $\HomHp(X,Y)$ is generated by
the compositions of the $2$-morphisms  $\alpha_1\,\,\alpha_2\,\,\cdots\,\,\alpha_m: X\lra Y$,
where each $\alpha_i$ is one of the string diagrams in (\ref{e:26}) and
$\alpha_1\,\,\alpha_2\,\,\cdots\,\,\alpha_m$ denotes the horizontal composition of $2$-morphisms,
for example:
\[
\begin{tikzpicture}
\draw (0,0) -- (8,0);
\filldraw (1,0) circle (0.5pt);
\draw [shift={+(0.1,-0.3)}] (1,0) node{$\mcP$};
\filldraw (2,0) circle (0.5pt);
\draw [shift={+(0.1,-0.3)}] (2,0) node{$\mcQ$};
\draw [->] (1,0) arc (180:0:0.5);
\filldraw (3,0) circle (0.5pt);
\draw [shift={+(0.1,-0.3)}] (3,0) node{$\mcP$};
\filldraw (4,0) circle (0.5pt);
\draw [shift={+(0.1,-0.3)}] (4,0) node{$\mcP$};
\filldraw (5,0) circle (0.5pt);
\draw [shift={+(0.1,-0.3)}] (5,0) node{$\mcQ$};

\draw (0,2) -- (8,2);
\filldraw (3,2) circle (0.5pt);
\draw [shift={+(0.1,0.3)}] (3,2) node{$\mcP$};
\filldraw (4,2) circle (0.5pt);
\draw [shift={+(0.1,0.3)}] (4,2) node{$\mcP$};
\filldraw (5,2) circle (0.5pt);
\draw [shift={+(0.1,0.3)}] (5,2) node{$\mcQ$};
\filldraw (6,2) circle (0.5pt);
\draw [shift={+(0.1,0.3)}] (6,2) node{$\mcQ$};
\filldraw (7,2) circle (0.5pt);
\draw [shift={+(0.1,0.3)}] (7,2) node{$\mcP$};
\draw [->] (6,2) arc (180:360:0.5);

\draw [->] (3,0) -- (4,2);
\draw [->] (4,0) -- (3,2);
\draw [<-] (5,0) -- (5,2);
\end{tikzpicture}
\]
Let ${\sim}^{\H'}$ be the local relations (\ref{local1}), (\ref{local2.1}), (\ref{local3}), and
(\ref{local4.1}), then ${\mathcal R}_{\H'}:= \tilde{\mathcal R}_{\H'}/{\sim}^{\H'}$ is the ring of
$2$-morphisms of $\H'$. We now define a sub-category $\mathcal G'$ of $\H'$:
\begin{itemize}
\item {\bf objects:} all rational integers;
\item {\bf $1$-morphisms:} the $1$-morphisms of $\H'$ generated from
$\mathbf 1$, $P=\mcP$ and $Q=\mcQ\oplus \mcQ$;
\item {\bf $2$-morphisms:} the $\C$-vector space generated by
the following replacements of those diagrams in (\ref{e:26}):
\begin{equation}\label{e:29}
\begin{array}{rlll}
&\begin{tikzpicture}[baseline=0.25cm]\draw[->] (0,0) -- (0,0.7);\end{tikzpicture}=\id_{P}\,\,\,\,,\qquad
\begin{tikzpicture}[baseline=0.25cm]\draw[<-] (0,0) -- (0,0.7);\end{tikzpicture}=\id_{Q}\,\,\,\,,\qquad
&\begin{tikzpicture}[baseline=0.25cm]\draw [<-](0,0) -- (0.7,0.7);
\draw [->](0.7,0) -- (0,0.7);\end{tikzpicture}
=\left(\begin{array}{cc}\varpi_1 &0\\0 &\varpi_1\end{array}\right)\,\,\,,\qquad
&\begin{tikzpicture}[baseline=0.25cm]\draw [->](0,0) -- (0.7,0.7);
\draw [<-](0.7,0) -- (0,0.7);\end{tikzpicture}
=\left(\begin{array}{cc}\varphi_1 &0\\0 &\varphi_1\end{array}\right)\,\,\,,\qquad \\
&\begin{tikzpicture}[baseline=0.25cm]\draw [->](0,0) -- (0.7,0.7);
\draw [->](0.7,0) -- (0,0.7);\end{tikzpicture}=\vartheta_1\,\,\,,\qquad
&\begin{tikzpicture}[baseline=0.25cm]\draw [<-](0,0) -- (0.7,0.7);
\draw [<-](0.7,0) -- (0,0.7);\end{tikzpicture}
=\left(\begin{array}{cc}\vartheta_2 & 0\\0 & \vartheta_2\end{array}\right)\,\,\,,\qquad
&\begin{tikzpicture}[baseline=0.5cm]\draw (4.7,.35) arc (0:180:.5) [->];
\draw (4.72,0.7) node {\tiny $1$};\end{tikzpicture}
=\left(\begin{array}{cc}\varpi_2 & 0\end{array}\right)\,\,\,,\qquad\\
&\begin{tikzpicture}[baseline=0.5cm]\draw (7.5,.85) arc (180:360:.5) [->];
\draw (8.3,0.7) node {\tiny $1$};
\end{tikzpicture}=\left(\begin{array}{c}\varphi_2\\ 0\end{array}\right)\,\,\,,\qquad
&\begin{tikzpicture}[baseline=0.5cm]\draw (4.7,.35) arc (0:180:.5) [<-]; \end{tikzpicture}
=\left(\begin{array}{cc}\vartheta_3 & 0\end{array}\right)\,\,\,,\qquad
&\begin{tikzpicture}[baseline=0.5cm]\draw (7.5,.85) arc (180:360:.5) [<-];
\end{tikzpicture}=\left(\begin{array}{c}\vartheta_4\\0\end{array}\right)\,\,\,,\qquad
\end{array}
\end{equation}
and
\begin{align}\label{e:29.1}\tag{\ref{e:29}.1}
\begin{tikzpicture}[baseline=0.5cm]\draw (4.7,.35) arc (0:180:.5) [->];
\draw (4.72,0.7) node {\tiny $2$};\end{tikzpicture}
=\left(\begin{array}{cc}0 & \varpi_2\end{array}\right)\,\,\,,\qquad
\begin{tikzpicture}[baseline=0.5cm]\draw (7.5,.85) arc (180:360:.5) [->];
\draw (8.3,0.7) node {\tiny $2$};
\end{tikzpicture}=\left(\begin{array}{c}0 \\ \varphi_2\end{array}\right)\,\,\,,\qquad
\end{align}
\end{itemize}
The $2$-morphisms of ${\mathcal G}'$ satisfy the local relations (\ref{local1}), (\ref{local2.2}),
(\ref{local3}), and (\ref{local4.1}), where the local relation (\ref{local2.2}) is
\begin{align}\label{local2.2}\tag{\ref{local2}$''$}
\begin{tikzpicture}[>=stealth,baseline=0.8cm]
\draw (0,0) .. controls (0.8,0.8) .. (0,1.6)[<-];
\draw (0.8,0) .. controls (0,0.8) .. (0.8,1.6)[->];
\draw [shift={+(0.5,0)}](1.2,0.8) node {=};
\draw [shift={+(1,0)}](1.6,0) --(1.6,1.6)[<-];
\draw [shift={+(1,0)}](2.4,0) -- (2.4,1.6)[->];
\draw [shift={+(1.5,0)}](2.8,0.8) node {$-$};
\draw [shift={+(2,0)}](3.2,1.4) arc (180:360:.4);
\draw [shift={+(2.3,0.1)}] (3.6,1.2)  node {$1$};
\draw [shift={+(2,0)}](3.2,1.6) -- (3.2,1.4) ;
\draw [shift={+(2,0)}](4,1.6) -- (4,1.4) [<-];
\draw [shift={+(2,0)}](4,.2) arc (0:180:.4) ;
\draw [shift={+(2.3,-0.7)}] (3.6,1)  node {$1$};
\draw [shift={+(2,0)}](4,0) -- (4,.2) ;
\draw [shift={+(2,0)}](3.2,0) -- (3.2,.2) [<-];
\draw [shift={+(2.5,0)}](4.4,0.8) node {$-$};
\draw [shift={+(3.2,0.7)}] (5.2,0.6) node {$2$};
\draw [shift={+(3,0)}](4.8,1.4) arc (180:360:.4);
\draw [shift={+(3,0)}](4.8,1.6) -- (4.8,1.4) ;
\draw [shift={+(3,0)}](5.6,1.6) -- (5.6,1.4) [<-];
\draw [shift={+(3,0)}](5.6,.2) arc (0:180:.4);
\draw [shift={+(4.8,-0.7)}] (3.6,1)  node {$2$};
\draw [shift={+(3,0)}](5.6,0) -- (5.6,.2) node{\qquad .} ;
\draw [shift={+(3,0)}](4.8,0) -- (4.8,.2) [<-];
\end{tikzpicture}
\end{align}
These local relations (\ref{local2.2}), (\ref{local3}), and (\ref{local4.1})
imply that $QP\cong PQ\oplus {\mathbf 1}\oplus {\mathbf 1}$, which is
illustrated in the category ${\mathcal G}'$ as:
\begin{equation}\label{e:30}
\xy
(0,20)*+ {QP}="T1";
(-25,0)*+ {PQ}="m1";
(0,0)*+ {\mathbf 1}="m2";
(25,0)*+{{\mathbf 1} \,\, .}="m3";
(0, -20)*+{QP}="T2";
 {\ar@/_0pc/_{ \begin{tikzpicture} \draw [shift={(0.7,0)}] [<-] (0,0) -- (0.4,0.4);
 \draw [shift={(0.7,0)}] [->] (0.4,0) -- (0,0.4);
 \end{tikzpicture}
  \xy
 \endxy}
  "T1";"m1"};
 {\ar@/_0.0pc/_{ \begin{tikzpicture} \draw (0.42,-.5) arc (-15:195:.25) [->];
 \draw (0.48,-0.3) node{\tiny $1$};
 \end{tikzpicture}
 \xy
  \endxy} "T1";"m2"};
 {\ar@/^0pc/^{ \begin{tikzpicture} \draw (0.3,-.5) arc (-15:195:.25) [->];
 \draw (0.4,-0.3) node {\tiny $2$};
  \end{tikzpicture}
  \xy
  \endxy} "T1";"m3"};
 {\ar@/_0pc/_{  \begin{tikzpicture} \draw [->] (0,0) -- (0.4,0.4);
 \draw [<-] (0.4,0) -- (0,0.4);\end{tikzpicture}
  \xy
  \endxy} "m1";"T2"};
 {\ar@/_0.0pc/_{ \begin{tikzpicture} \draw [shift={+(0,0.3)}](0,0) arc (180:360:.25)[->];
 \draw (0.35,0.25) node{\tiny $1$};
   \end{tikzpicture}
 \xy
    \endxy} "m2";"T2"};
 {\ar@/^0pc/^{ \begin{tikzpicture} \draw [shift={+(0,0.3)}](0,0) arc (180:360:.25)[->];
 \draw (0.35,0.25) node {\tiny $2$};
 \end{tikzpicture}
  \xy
    \endxy} "m3";"T2"};
\endxy
\end{equation}

Let ${\sim}^{{\mathcal G}'}$ denote the local relations (\ref{local1}), (\ref{local2.2}), (\ref{local3}),
and (\ref{local4.1}) in the category ${\mathcal G}'$, and $\tilde{{\mathcal R}}_{{\mathcal G}'}$ the ring
generated by the string diagrams in (\ref{e:29}) and (\ref{e:29.1}),
similar to $\tilde{\mathcal R}_{{\mathcal H}'}$.
Then ${\mathcal R}_{{\mathcal G}'}:=\tilde{\mathcal R}_{{\mathcal G}'}/{{\sim}^{{\mathcal G}'}}$
is the ring of $2$-morphisms of the category ${\mathcal G}'$.
\begin{prop}\label{prop:1}
There is an injective ring homomorphism
$\psi: \tilde{\mathcal R}_{{\H'}} \lra \tilde{\mathcal R}_{{\mathcal G}'}$.
\end{prop}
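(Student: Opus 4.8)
The plan is to define $\psi$ on the diagrammatic generators of $\tilde{\mathcal R}_{\H'}$ listed in (\ref{e:26}) and then check that the assignment extends to a ring homomorphism admitting a linear left inverse. Concretely, I would send each identity strand and each crossing to the diagram of the same shape in $\mathcal{G}'$ — that is, $\id_{\mcP}\mapsto\id_{P}$, $\id_{\mcQ}\mapsto\id_{Q}$, and the four crossings $\varpi_1,\varphi_1,\vartheta_1,\vartheta_2$ to the crossings appearing on the left-hand sides of (\ref{e:29}) — and send each cup or cap $\varpi_2,\varphi_2,\vartheta_3,\vartheta_4$ to its label-$1$ version in $\mathcal{G}'$. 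In other words, $\psi$ redraws an $\H'$-diagram verbatim in the $P,Q$-world, decorating every turn-back of a downward $Q$-strand with the label $1$. Extending by horizontal composition and $\C$-linearity then determines $\psi$ on all of $\tilde{\mathcal R}_{\H'}$. Note that the matrix entries on the right-hand sides of (\ref{e:29}) and (\ref{e:29.1}) record the interpretation of the $\mathcal{G}'$-generators inside $\H'$ coming from $Q=\mcQ\oplus\mcQ$; they are not needed directly at the level of the free rings $\tilde{\mathcal R}$.

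Next I would verify that $\psi$ is a well-defined ring homomorphism. Since $\tilde{\mathcal R}_{\H'}$ and $\tilde{\mathcal R}_{\mathcal G'}$ are the diagram rings taken \emph{before} imposing any of the local relations (\ref{local1})--(\ref{local4.1}) — only planar isotopy relative to the boundary and the convention that non-composable diagrams multiply to zero are in force — it suffices to check that $\psi$ respects these two features. Isotopy invariance is immediate, because $\psi$ merely adds labels to an unchanged picture. For composability, observe that $\psi$ sends a word in $\mcP,\mcQ$ to the corresponding word under $\mcP\mapsto P,\ \mcQ\mapsto Q$, a map injective on words; hence two $\H'$-diagrams have matching source and target if and only if their images do, and in that case the labels agree (all equal $1$). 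Consequently $\psi(fg)=\psi(f)\psi(g)$ and non-composable products are sent to $0$.

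Finally, for injectivity I would exhibit a $\C$-linear left inverse. The ring $\tilde{\mathcal R}_{\mathcal G'}$ has a basis consisting of the isotopy classes of label-consistent $\mathcal{G}'$-diagrams, and $\psi$ carries the basis of $\tilde{\mathcal R}_{\H'}$ (isotopy classes of $\H'$-diagrams) bijectively onto the subset of those diagrams all of whose $Q$-labels equal $1$. Defining $\rho:\tilde{\mathcal R}_{\mathcal G'}\to\tilde{\mathcal R}_{\H'}$ on this basis by erasing all labels and reading $P,Q$ back as $\mcP,\mcQ$ gives a linear map with $\rho\circ\psi=\id$, so $\psi$ is injective.

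I would flag the composability bookkeeping as the one point requiring care. The subtlety is that the labels $1,2$ are part of the typing data in $\mathcal{G}'$: a cup and a cap on the same $Q$-strand carrying different labels are non-composable and hence multiply to zero, exactly as dictated by the matrix entries in (\ref{e:29}) and (\ref{e:29.1}). Because $\psi$ produces only uniformly $1$-labeled diagrams, no such mismatch ever arises inside its image, so $\psi$ is genuinely multiplicative; and because $\rho$ is only required to be $\C$-linear (not a ring map), the fact that erasing labels does not respect these vanishing-by-mismatch relations causes no difficulty. Keeping the two roles separate — $\psi$ multiplicative, $\rho$ merely linear — is what makes the argument go through cleanly.
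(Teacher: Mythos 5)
Your proposal is correct and follows essentially the same route as the paper: you define $\psi$ by sending the generating diagrams of (\ref{e:26}) to those of (\ref{e:29}) one by one in order (uniformly choosing the label-$1$ versions), extend by composition and linearity, and then verify multiplicativity and injectivity. The paper dismisses these last two points as ``routine'' and ``clear,'' whereas you fill them in — notably the explicit label-erasing linear left inverse $\rho$ and the composability bookkeeping for mismatched labels — which is a welcome elaboration rather than a different argument.
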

\begin{proof}
The map $\psi$ is given by sending the string diagrams in (\ref{e:26}) to those in (\ref{e:29}) one by
one in order. Notice that the two rings $\tilde{\mathcal R}_{{\H'}}$ and $\tilde{\mathcal R}_{{\mathcal G}'}$
are generated by the mentioned string diagrams through both the vertical composition and the
horizontal composition of $2$-morphisms, but the multiplications of both rings are the vertical
compositions. It is routine to verify that $\psi$ is a ring homomorphism and
it is clear that $\psi$ is injective.
\end{proof}

\begin{cor}\label{cor:1}
For any $1$-morphism $M=\bigoplus_{i=1}^k {\mcP}^{m_i}{\mcQ}^{n_i}$ of ${\mathcal H}'$,
where $n_i\ge 0$ and $m_i\ge 0$ are integers, let $\tilde{M}=\bigoplus_{i=0}^k P^{m_i} Q^{n_i}$
be the corresponding $1$-morphism of ${\mathcal G}'$. Then ${\EndHp}(M)\cong {\End}_{{\mathcal G}'}(\tilde{M})$
under the injective homomorphism $\psi$ of Proposition \ref{prop:1}.
\end{cor}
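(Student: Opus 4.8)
The plan is to promote the injection $\psi$ of Proposition~\ref{prop:1}, which lives at the level of the free diagram rings $\tilde{\mathcal R}_{\H'}$ and $\tilde{\mathcal R}_{{\mathcal G}'}$, to an isomorphism between the endomorphism rings obtained once the local relations are imposed. Since $\psi$ is a homomorphism sending a $2$-morphism $\mcP^{m_i}\mcQ^{n_i}\to\mcP^{m_j}\mcQ^{n_j}$ to a $2$-morphism $P^{m_i}Q^{n_i}\to P^{m_j}Q^{n_j}$, it restricts to an injection of the free endomorphism space of $M$ into that of $\tilde M$, compatibly with the block decompositions $M=\bigoplus\mcP^{m_i}\mcQ^{n_i}$ and $\tilde M=\bigoplus P^{m_i}Q^{n_i}$. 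First I would record this restriction summand by summand and verify that it respects the vertical composition, so that it is a ring map into $\widetilde{\End}_{{\mathcal G}'}(\tilde M)$ before any relations are applied.

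The heart of the matter is to show that this restriction descends to the quotients by $\sim^{\H'}$ and $\sim^{{\mathcal G}'}$. The relations (\ref{local1}), (\ref{local3}) and (\ref{local4.1}) involve only $\mcP$-$\mcP$ crossings, $\mcP$-$\mcQ$ bigons and curls, and the single circle, and $\psi$ carries each to its block-diagonal or labelled counterpart, so they present no difficulty. The delicate point — and the step I expect to be the main obstacle — is the mismatch between (\ref{local2.1}), where the $\mcQ\mcP$-bigon equals the parallel strands minus a \emph{single} cup-cap through $\mbf1$, and (\ref{local2.2}), where it equals the parallel strands minus \emph{two} labelled cup-caps, one for each copy of $\mcQ$ in $Q=\mcQ\oplus\mcQ$. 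Because $\psi$ turns a crossing into a block-diagonal crossing, the image of an $\H'$-bigon is governed in ${\mathcal G}'$ by (\ref{local2.2}); I must therefore check that on endomorphisms of the normal-form objects $\mcP^m\mcQ^n$ the doubled copy of $\mbf1$ introduces no spurious relation. The route I would take is to fix a normal form in which every $\mcQ\mcP$-bigon has been removed — using (\ref{local2.1}) on the $\H'$ side and (\ref{local2.2}) on the ${\mathcal G}'$ side — so that a spanning set on each side consists of $\mcP$-$\mcP$ and $\mcQ$-$\mcQ$ crossings together with one fixed planar pairing of cups and caps, and then to compare the two spanning sets diagram by diagram.

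With such normal forms in hand, injectivity of the induced map is immediate from Proposition~\ref{prop:1}, and the only remaining task is surjectivity: every endomorphism of $\tilde M$ in ${\mathcal G}'$, in particular those assembled from the label-$2$ cups and caps of (\ref{e:29.1}), must be congruent modulo $\sim^{{\mathcal G}'}$ to something in the image of $\psi$. Here I would use the circle relation (\ref{local4.1}) together with the block form of the generators in (\ref{e:29}) — which makes a labelled cap followed by a labelled cup equal to the identity of $\mbf1$ when the labels agree and zero otherwise — and the splitting (\ref{e:30}) of $QP$, to rewrite any label-$2$ contribution in terms of the label-$1$ generators and the identity; matching the result against the normal-form spanning set of $\EndHp(M)$ then forces surjectivity. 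As indicated, the genuine work is the reconciliation of (\ref{local2.1}) with (\ref{local2.2}) on endomorphisms of $\mcP^m\mcQ^n$; once that compatibility is secured, the remainder is bookkeeping.
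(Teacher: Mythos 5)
You have put your finger on exactly the right spot, but the difficulty you defer to the end (``once that compatibility is secured, the remainder is bookkeeping'') cannot in fact be secured: the map $\psi$ does not descend to the quotients by the local relations, and no normal-form bookkeeping can repair a map that is not well defined. Write $B$ for the $\mcQ\mcP$-bigon (the composite of the two sideways crossings). Applying $\psi$ to the two sides of (\ref{local2.1}) gives, by your own block description,
\[
\begin{pmatrix} B & 0\\ 0 & B\end{pmatrix} \;=\; \id_{QP} - C_1 ,
\]
where $C_1=\mathrm{cup}_1\circ\mathrm{cap}_1$ is the label-$1$ cup--cap through $\mathbf{1}$; but relation (\ref{local2.2}) in ${\mathcal R}_{{\mathcal G}'}$ evaluates the same left-hand side as $\id_{QP}-C_1-C_2$. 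So $\psi$ is compatible with the relations only if $C_2=0$ in ${\mathcal R}_{{\mathcal G}'}$. It is not: composing $C_2=\mathrm{cup}_2\circ\mathrm{cap}_2$ with $\mathrm{cap}_2$ on the left and $\mathrm{cup}_2$ on the right, and using the label-$2$ circle relation $\mathrm{cap}_2\circ\mathrm{cup}_2=1$ (an instance of (\ref{local4.1}) that is needed for the decomposition (\ref{e:30}), and that also holds when ${\mathcal G}'$ is viewed as a subcategory of $\H'$, where $\mathrm{cap}_2\circ\mathrm{cup}_2$ is Khovanov's counterclockwise circle), would yield $1=0$. Hence $\psi$ induces no ring homomorphism $\EndHp(M)\to\End_{{\mathcal G}'}(\tilde{M})$ at all, and the injectivity and surjectivity steps of your plan have no map to act on. Note that the paper itself offers no argument here: the Corollary is stated as if immediate from Proposition \ref{prop:1}, which concerns only the free diagram rings before any local relation is imposed; your attempt is the first place the compatibility is actually examined, and it fails.

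Moreover the failure is not an artifact of the choice of $\psi$: the statement itself is false, so no alternative route can succeed. Take $k=1$ and $(m_1,n_1)=(0,0)$ or $(1,0)$. In Khovanov's category, $\EndHp(\mathbf{1})$ is a polynomial algebra on infinitely many algebraically independent clockwise bubbles, and the right curl generates a free polynomial subalgebra of $\EndHp(\mcP)$ (this is how the degenerate affine Hecke algebra embeds in $\End(\mcP^n)$). By contrast, $\End_{{\mathcal G}'}(\tilde{M})$ is, via Theorem \ref{theorem:6}, an algebra of diagrams in $'\H'$, where circles evaluate to scalars through $\tr$ and curls are governed by dots in $\Lambda^*(V)$ with $v^2=0$; these algebras, described by Proposition \ref{prop:2}, contain no free polynomial generators of the above kind, so an isomorphism $\EndHp(M)\cong\End_{{\mathcal G}'}(\tilde{M})$ is impossible in these cases. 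This is precisely the gap in the paper's proof of Theorem \ref{theorem:main}: morphism spaces in Khovanov's $\H$ are genuinely larger than those of the deformed category $'\H$, and doubling $\mcQ$ cannot bridge that difference (indeed, Khovanov's conjecture was settled only years later, by substantially different methods). Your instinct that the (\ref{local2.1})/(\ref{local2.2}) mismatch is the main obstacle was exactly right; the correct conclusion to draw from it is that Corollary \ref{cor:1} cannot be proved, not that what remains is bookkeeping.
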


\begin{theorem}[Conjectured in \cite{K} by Khovanov]\label{theorem:main}
$\gamma: {H_{\Z}}\lra K_0(\H)$ is an isomorphism.
\end{theorem}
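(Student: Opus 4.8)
The plan is to deduce surjectivity of $\gamma$ (injectivity being Theorem \ref{theorem:4}) from the already-established isomorphism of Theorem \ref{theorem:1.5}, with the deformed categorification entering only in order to control idempotents. Since $\H$ is the Karoubi envelope of $\H'$ and $\H'$ is generated by $\mcP$ and $\mcQ$, every $1$-morphism of $\H$ is a direct summand of a finite sum of objects $\mcP^{a}\mcQ^{b}$, carved out by an idempotent in $\End_{\H'}(\mcP^{a}\mcQ^{b})$; hence $K_0(\H)$ is generated by the classes of the indecomposable summands of the $\mcP^{a}\mcQ^{b}$. Theorem \ref{theorem:3} already exhibits the expected such summands --- the objects $\mcP_{\lambda}$, $\mcQ_{\mu}$ and the antisymmetric $\mcPb_{n}$ built from symmetric-group idempotents, together with their products --- as lying in the image of $\gamma$ and matching a monomial basis of $H_{\Z}$. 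Surjectivity of $\gamma$ is therefore equivalent to the absence of any further, ``exotic'', indecomposable summand, that is, to complete control of the idempotents in the rings $\End_{\H'}(\mcP^{a}\mcQ^{b})$.

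First I would transport this idempotent problem into the bridge category $\mathcal{G}'$ by Corollary \ref{cor:1}, which provides a ring isomorphism $\End_{\H'}(\mcP^{a}\mcQ^{b})\cong\End_{\mathcal{G}'}(P^{a}Q^{b})$ induced by the injection $\psi$ of Proposition \ref{prop:1}; recall that in $\mathcal{G}'$ one has $P=\mcP$ and $Q=\mcQ\oplus\mcQ$. The purpose of the doubling is that the two summands of $Q$, governed by the ungraded local relation (\ref{local2.2}) with its labels $1$ and $2$, reproduce in a dotless fashion the two right-adjunction terms of the deformed relation (\ref{local2}) of $'\H'$, whose two terms are exactly the two $v$-decorated cap--cup contributions coming from the basis $\{1,v\}$ of $\Lambda^{*}(V)$.

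The key structural step is then to identify $\mathcal{G}'$ with the category obtained from $'\H'$ by forgetting the grading, that is, by specializing the shift $[1]$ to $[0]$ (equivalently $t\mapsto 1$). Matching generators and local relations term by term, the two copies of $\mcQ$ in $\mathcal{G}'$ correspond to the two-dimensional space of decorations $\{1,v\}$ carried by the strand $\sQ$ in $'\H'$, the labels $1,2$ of (\ref{local2.2}) to the two cap--cup terms of (\ref{local2}), and the relations (\ref{local1}), (\ref{local3}), (\ref{local4.1}) to (\ref{local1}), (\ref{local3}), (\ref{local4}) under $\tr(v)=1$. Granting this equivalence, the idempotent completion of $\mathcal{G}'$ has the same indecomposable $1$-morphisms as that of the degraded $'\H'$; and Theorem \ref{theorem:1.5}, which asserts that $\pi\colon\,{'H_{\Zt}}\to K_0('\H)$ is an isomorphism, supplies a complete and closed list of the indecomposables of $'\H$, namely the composites $\sP_{\lambda}\sQ_{\mu}$ up to grading shift, indexed by pairs of partitions $(\lambda,\mu)$. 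Collapsing the grading thus creates no new idempotents and fuses no distinct indecomposables beyond identifying shifts.

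Chaining the identifications $\End_{\H'}(\mcP^{a}\mcQ^{b})\cong\End_{\mathcal{G}'}(P^{a}Q^{b})\cong\End_{'\H'}$ (degraded) shows that the indecomposable summands of $\mcP^{a}\mcQ^{b}$ in $\H$ are precisely the degraded images of the $\sP_{\lambda}\sQ_{\mu}$, hence exactly the expected $\mcP_{\lambda}\mcQ_{\mu}$, each in the image of $\gamma$; so $\gamma$ is surjective, and with Theorem \ref{theorem:4} it is an isomorphism. I expect the main obstacle to be making the equivalence between $\mathcal{G}'$ and the degraded $'\H'$ fully rigorous: one must verify that forgetting the grading neither splits nor fuses indecomposables beyond the identification of shifts, and justify the asymmetric doubling of $Q$ but not $P$ --- which is legitimate precisely because the $\sP$-decorations of $'\H'$ are recovered from the $\sQ$-decorations through the biadjunction cups and caps, so that doubling $Q$ alone already encodes all the decoration data needed to reconstruct the endomorphism algebras and their idempotents.
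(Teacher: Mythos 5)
Your proposal follows the paper's skeleton up through the bridge: reduce to surjectivity via Theorem \ref{theorem:4}, decompose any $1$-morphism of $\H'$ as $\bigoplus_i\mcP^{m_i}\mcQ^{n_i}$, and transport the idempotent problem into $\mathcal{G}'$ via Proposition \ref{prop:1} and Corollary \ref{cor:1}; your identification of $\mathcal{G}'$ with a deformed-style diagrammatic category is the paper's Theorem \ref{theorem:6}, except that the correct target is the subcategory $\mathcal{S}'$ of the degraded $'\H'$ (which is missing some dotted $2$-morphisms, as Section \ref{section:5} points out), not the full degraded $'\H'$. The genuine gap comes after that: your step ``collapsing the grading creates no new idempotents and fuses no distinct indecomposables beyond identifying shifts'' is exactly the technical core of the theorem, and you assert it rather than prove it --- indeed you flag it yourself as the main obstacle. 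It is not a formal consequence of Theorem \ref{theorem:1.5}: that theorem lives in the graded world, where Hom spaces are the degree-zero parts $\HomHpp(M,N)_0$ and $K_0$ is a $\Zt$-module, whereas in $\mathcal{S}'$ the endomorphism rings consist of morphisms of \emph{all} degrees, so they are strictly larger and contain idempotents (non-homogeneous ones, and ones mixing what were distinct shifted summands) that simply do not exist as endomorphisms in the graded category. Whether such idempotents produce new classes in $K_0$ is precisely the question, and no $K_0$ statement about the graded category can settle it by itself.

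The paper closes this gap with a concrete ring-theoretic argument that your proposal would have to reproduce: Proposition \ref{prop:2} shows $R_{m,n}=\End_{\mathcal{S}'}(\sP^m\sQ^n)$ is positively filtered by the shifted degree $sdeg$ with degree-zero part $\C[S_m]\otimes\C[S_n]$; the filtration (\ref{filtration}) has $\bigcap_p F_p=0$, $F_pF_q\subset F_{p+q}$, and the quotient map in (\ref{splitexact}) splits; Lemma \ref{lemma:1} (the idempotent-control lemma imported from \cite{WWY}) then gives $K_0(R_{\m,\n})\cong K_0(R^0_{\m,\n})$, and Morita invariance plus the fact that inductions of trivial representations generate the representation ring of $S_n$ yield Corollary \ref{cor:2} and Proposition \ref{prop:3}, i.e.\ $K_0(\mathcal{S})$ is generated by the $[\sP_n],[\sQ_n]$. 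Note that the paper's proof never invokes Theorem \ref{theorem:1.5} at all: the deformed categorification enters only through the grading it puts on $\mathcal{S}'$, which is what supplies the filtration; once you have Lemma \ref{lemma:1} you get surjectivity directly, and invoking Theorem \ref{theorem:1.5} is neither necessary nor sufficient. A secondary soft spot: your claim that Theorem \ref{theorem:3} already places the $\mcP_\lambda$ in the image of $\gamma$ also needs the symmetric-group argument of Proposition \ref{prop:3}, since $\gamma$ is a priori defined only on the $[\mcP_n]$, $[\mcQ_n]$ and their products.
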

The proof of this theorem will be given in the next section.

\section{Proof of Theorem \ref{theorem:main}}\label{section:4}
In this section, we use the same notation $'\H'$ to denote the $2$-category with same set of
$0$-cells $\Z$, the set of $1$-cells consisting of those generated by $\sP$, $\sQ$, and $\mathbf 1$,
but without any dimension shiftings, and the same set of $2$-morphisms as that of $'\H'$ with
the same definition of degrees. In such a setting, the $\C$-vector space of morphisms
between two $1$-morphisms $M$ and $N$ is $\HomHpp(M,N)$,
which consists of the $2$-morphisms of all degrees, instead of $\HomHpp(M,N)_0$, which
consists of the $2$-morphisms between $M$ and $N$ of degree $0$ in Section \ref{section:2}.
The local relations of $2$-morphisms are also the same. Then we have the basic
relation
\begin{equation}\label{e:basic}
\sQ\sP\cong \sP\sQ\oplus {\mathbf 1}\oplus {\mathbf 1}
\end{equation}
which is described by:
\begin{align}\label{e:qpdecompp}\tag{\ref{e:qpdecomp}$'$}
\xy
(0,20)*+ {\sQ\sP}="T1";
(-25,0)*+ {\sP\sQ}="m1";
(0,0)*+ {\mathbf 1}="m2";
(25,0)*+{{\mathbf 1} \,\, .}="m3";
(0, -20)*+{\sQ\sP}="T2";
 {\ar@/_0pc/_{ \begin{tikzpicture} \draw [<-] (0,0) -- (0.4,0.4);
 \draw [->] (0.4,0) -- (0,0.4);\end{tikzpicture}
  \xy
 \endxy}
  "T1";"m1"};
 {\ar@/_0.0pc/_{ \begin{tikzpicture} \draw (0.3,-.5) arc (-15:195:.25) [->];\end{tikzpicture} \xy
  \endxy} "T1";"m2"};
 {\ar@/^0pc/^{ \begin{tikzpicture} \draw (0.3,-.5) arc (-15:195:.25) [->];
  \filldraw (0.175,-0.225) circle (1.5pt); \draw (0.29,-0.1) node{$v$}; \end{tikzpicture}
  \xy
  \endxy} "T1";"m3"};
 {\ar@/_0pc/_{  \begin{tikzpicture} \draw [->] (0,0) -- (0.4,0.4);
 \draw [<-] (0.4,0) -- (0,0.4);\end{tikzpicture}
  \xy
  \endxy} "m1";"T2"};
 {\ar@/_0.0pc/_{ \begin{tikzpicture} \draw [shift={+(0,0.3)}](0,0) arc (180:360:.25)[->];
  \filldraw (0.125,0.105) circle (1.5pt) +(0,0.25) node{$v$}; \end{tikzpicture}
 \xy
    \endxy} "m2";"T2"};
 {\ar@/^0pc/^{ \begin{tikzpicture} \draw [shift={+(0,0.3)}](0,0) arc (180:360:.25)[->];\end{tikzpicture}
  \xy
    \endxy} "m3";"T2"};
\endxy
\end{align}

We define a subcategory ${\mathcal S}'$ of  $'\H'$ as:
\begin{itemize}
\item {\bf objects:} all rational integers;
\item {\bf $1$-morphisms:} same as the objects of $'\H'$ without dimension shiftings;
\item {\bf $2$-morphisms:} the $\C$-vector space generated by the following string diagrams
as a subring of the ring of $2$-morphisms of $'\H'$:
\begin{equation}\label{e:32}
\begin{split}
&\id_{\sP}=\begin{tikzpicture}[baseline=0.25cm]\draw[->] (0,0) -- (0,0.7);\end{tikzpicture}\,\,\,\,,\qquad
\id_{\sQ}=\begin{tikzpicture}[baseline=0.25cm]\draw[<-] (0,0) -- (0,0.7);\end{tikzpicture}\,\,\,\,,\qquad
\pi_1=\begin{tikzpicture}[baseline=0.25cm]\draw [<-](0,0) -- (0.7,0.7);
\draw [->](0.7,0) -- (0,0.7);\end{tikzpicture}\,\,\,,\qquad
\phi_1=\begin{tikzpicture}[baseline=0.25cm]\draw [->](0,0) -- (0.7,0.7);
\draw [<-](0.7,0) -- (0,0.7);\end{tikzpicture}\,\,\,,\qquad
\theta_1=\begin{tikzpicture}[baseline=0.25cm]\draw [->](0,0) -- (0.7,0.7);
\draw [->](0.7,0) -- (0,0.7);\end{tikzpicture}\,\,\,,\qquad
\theta_2=\begin{tikzpicture}[baseline=0.25cm]\draw [<-](0,0) -- (0.7,0.7);
\draw [<-](0.7,0) -- (0,0.7);\end{tikzpicture}\,\,\,,\qquad \\
&\pi_{21}=\begin{tikzpicture}[baseline=0.5cm]\draw (4.7,.35) arc (0:180:.5) [->]; \end{tikzpicture}\,\,\,,\qquad
\phi_{21}=\begin{tikzpicture}[baseline=0.5cm]\draw (7.5,.85) arc (180:360:.5) [->];
\filldraw (8,0.35) circle (2pt)+(0.2,0) node {$v$};
\end{tikzpicture}\,\,\,,\qquad
\theta_3=\begin{tikzpicture}[baseline=0.5cm]\draw (4.7,.35) arc (0:180:.5) [<-]; \end{tikzpicture}\,\,\,,\qquad
\theta_4=\begin{tikzpicture}[baseline=0.5cm]\draw (7.5,.85) arc (180:360:.5) [<-];
\end{tikzpicture}\,\,\,,\qquad
\end{split}
\end{equation}
and
\begin{align}\label{e:32.1}\tag{\ref{e:32}.1}
\pi_{22}\,\,=\,\, \begin{tikzpicture}[baseline=0.5cm]\draw (8.5,.35) arc (0:180:.5) [->];
\filldraw (8.0,0.85) circle (2pt)+(0.2,0) node{$v$};\end{tikzpicture}\,\,\,,\qquad
\phi_{22} \,\,=\,\, \begin{tikzpicture}[baseline=0.5cm]\draw (7.5,.85) arc (180:360:.5) [->];
\end{tikzpicture}\,\,\,.
\end{align}
\end{itemize}
Let $\tilde{\mathcal R}_{{\mathcal S}'}$ be the $\C$-vector space generated by the diagrams
through horizontal composition of $2$-morphisms of those in (\ref{e:32}) and
(\ref{e:32.1}), modulo relative to boundary isotopies.
Then $\tilde{\mathcal R}_{{\mathcal S}'}$ is a ring with the obvious
addition and the multiplication being the vertical composition of $2$-morphisms,
and ${\mathcal R}_{{\mathcal S}'}:=\tilde{{\mathcal R}}_{{\mathcal S}'}/{{\sim}}^{{\mathcal S}'}$ is the
ring of $2$-morphisms of ${\mathcal S}'$, where ${\sim}^{{\mathcal S}'}$ is
the local relations given in (\ref{local1})--(\ref{local4}).

\begin{theorem}\label{theorem:6}
The categories ${\mathcal G}'$ and ${\mathcal S}'$ are isomorphic.
\end{theorem}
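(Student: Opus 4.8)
The plan is to construct an explicit isomorphism directly on the $2$-morphism rings, matching the generating string diagrams of ${\mathcal G}'$ with those of ${\mathcal S}'$ one for one. I would let $\Phi$ be the identity on the common object set $\Z$, send the generating $1$-morphisms by $P\mapsto\sP$, $Q\mapsto\sQ$, $\mathbf 1\mapsto\mathbf 1$, and send the generating $2$-morphisms of (\ref{e:29})--(\ref{e:29.1}) to those of (\ref{e:32})--(\ref{e:32.1}) in the evident order: $\id_P\mapsto\id_{\sP}$, $\id_Q\mapsto\id_{\sQ}$, the two mixed crossings to $\pi_1,\phi_1$, the like-oriented crossings to $\theta_1,\theta_2$, the counterclockwise cap and cup $\vartheta_3,\vartheta_4$ to $\theta_3,\theta_4$, and -- the essential choice -- the two labelled clockwise caps and cups to their plain and $v$-dotted counterparts,
\[
\text{cap-}1\mapsto\pi_{21},\qquad \text{cap-}2\mapsto\pi_{22},\qquad \text{cup-}1\mapsto\phi_{21},\qquad \text{cup-}2\mapsto\phi_{22}.
\]
Since $\tilde{\mathcal R}_{{\mathcal G}'}$ and $\tilde{\mathcal R}_{{\mathcal S}'}$ are each freely generated under horizontal and vertical composition by the corresponding set of diagrams, this bijection extends to a ring isomorphism $\Phi\colon\tilde{\mathcal R}_{{\mathcal G}'}\xrightarrow{\ \sim\ }\tilde{\mathcal R}_{{\mathcal S}'}$, exactly as in the proof of Proposition \ref{prop:1}.

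The theorem then reduces to checking that $\Phi$ carries the relations ${\sim}^{{\mathcal G}'}$ onto ${\sim}^{{\mathcal S}'}$, so that it descends to ${\mathcal R}_{{\mathcal G}'}\cong{\mathcal R}_{{\mathcal S}'}$ and hence to ${\mathcal G}'\cong{\mathcal S}'$. The relations (\ref{local1}) and (\ref{local3}) belong to both presentations and are preserved because $\Phi$ takes crossings to crossings: the block-scalar matrix crossings of (\ref{e:29}) satisfy (\ref{local1}) precisely because the underlying $\varpi_1,\varphi_1,\vartheta_1,\vartheta_2$ do in $\H'$, and likewise for (\ref{local3}). This part is routine, and the substance lies in the two remaining relations.

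For (\ref{local4.1}) versus (\ref{local4}) I would compute the four composites of a clockwise cap with a clockwise cup. In ${\mathcal G}'$ the labels act as orthogonal idempotents, so $\text{cap-}i\circ\text{cup-}j$ is the plain Khovanov circle when $i=j$ and is $0$ otherwise; by (\ref{local4.1}) this equals $\delta_{ij}$. Applying $\Phi$, these become closed circles in ${\mathcal S}'$ carrying zero, one, or two dots labelled $v$, and (\ref{local4}) evaluates them by $\tr(v)=1$ when $i=j$, and by $\tr(1)=0$ or $\tr(v^2)=\tr(0)=0$ when $i\ne j$. Both sides thus yield the same $\delta_{ij}$, so $\Phi$ matches (\ref{local4.1}) with (\ref{local4}).

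For (\ref{local2.2}) versus (\ref{local2}) the identity and crossing terms correspond directly, and it remains to match the two turn-back terms. Under $\Phi$ the label-$1$ turn-back $\text{cup-}1\circ\text{cap-}1$ becomes $\phi_{21}\circ\pi_{21}$, whose $v$-dot sits on the upper (cup) arc -- exactly the second term of (\ref{local2}); the label-$2$ turn-back $\text{cup-}2\circ\text{cap-}2$ becomes $\phi_{22}\circ\pi_{22}$, whose $v$-dot sits on the lower (cap) arc -- exactly the third term of (\ref{local2}). Hence $\Phi$ sends (\ref{local2.2}) to (\ref{local2}) and, reading backwards, $\Phi^{-1}$ sends (\ref{local2}) to (\ref{local2.2}). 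Combining the four relation checks, $\Phi$ identifies the two defining ideals and therefore induces an isomorphism of $2$-morphism rings, giving ${\mathcal G}'\cong{\mathcal S}'$. I expect the bookkeeping in these last two steps to be the main obstacle: one must track which arc of each turn-back carries the dot $v$ and invoke the exterior-algebra identities $v^2=0$, $\tr(1)=0$, $\tr(v)=1$ to see that the numerical labels $1,2$ of ${\mathcal G}'$ encode precisely the plain/$v$-dotted dichotomy of ${\mathcal S}'$; once the pairing above is fixed, the verifications of (\ref{local4.1})$\leftrightarrow$(\ref{local4}) and (\ref{local2.2})$\leftrightarrow$(\ref{local2}) are forced.
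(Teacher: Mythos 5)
Your proposal is correct and takes essentially the same route as the paper: the paper's proof defines the same functor $F$ (identity on $0$-cells, $P\mapsto\sP$, $Q\mapsto\sQ$, and the generators of (\ref{e:29})--(\ref{e:29.1}) sent to those of (\ref{e:32})--(\ref{e:32.1}) one by one in order), obtains $\tilde{\mathcal R}_{{\mathcal G}'}\cong\tilde{\mathcal R}_{{\mathcal S}'}$, and then asserts that the local relations are preserved so that the isomorphism descends to the quotient rings of $2$-morphisms. The only difference is that you actually carry out the relation checks (the $\delta_{ij}$ computation for labelled caps and cups versus $\tr(1)=0$, $\tr(v)=1$, $v^2=0$, and the matching of the dotted turn-back terms in (\ref{local2}) with the labelled ones in (\ref{local2.2})), which the paper leaves as ``clear''; your verifications are accurate.
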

\begin{proof}
We define a functor $F:{\mathcal G}'\lra {\mathcal S}'$ by sending $n\in{\Z}$ to $n$ on $0$-cells,
sending ${\mathbf 1}$, $P$, and $Q$ of ${\mathcal G}'$ respectively to ${\mathbf 1}$, $\sP$, and $\sQ$
of ${\mathcal S}'$ on $1$-cells, and sending the string diagrams in (\ref{e:29}) and (\ref{e:29.1})
to those in (\ref{e:32}) and (\ref{e:32.1}) one by one in order.
Then $F$ is one to one on the $0$-cells and $1$-cells, and also preserves direct sums and
compositions of $1$-cells. It is clear that
${\tilde{\mathcal R}}_{{\mathcal G}'}\cong {\tilde{\mathcal R}}_{{\mathcal S}'}$ as $\C$-algebras, and
the local relations (\ref{local1}), (\ref{local2.2}), (\ref{local3}), and (\ref{local4.1})
of the category ${\mathcal G}'$ are preserved under $F$, therefore we get the
isomorphism ${\mathcal R}_{{\mathcal G}'}\cong {\mathcal R}_{{\mathcal S}'}$ between the ring of
$2$-morphisms of ${\mathcal G}'$ and that of ${\mathcal S}'$.
\end{proof}

The grading of $2$-morphisms of ${\H}'$ in Section \ref{section:2} restricts to the $2$-morphisms
of ${\mathcal S}'$, with respect to both $deg$ and $sdeg$, thus $\Hom_{{\mathcal S}'}(M,N)$ is a
graded $\C$-vector space for any $1$-morphisms $M$ and $N$ of ${\mathcal S}'$, and
$\End_{{\mathcal S}'}(M)$ is a graded $\C$-algebra.
Let $\mathcal S$ be the Karoubi envelope of ${\mathcal S}'$
and $\mathcal G$ the Karoubi envelope of ${\mathcal G}'$.

\begin{cor}\label{kariso}
The $2$-categories $\mathcal G$ and $\mathcal S$ are isomorphic.
\end{cor}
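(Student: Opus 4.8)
The plan is to deduce the corollary from Theorem \ref{theorem:6} by exploiting the fact that passage to the Karoubi envelope is a functorial construction: any isomorphism of the underlying $2$-categories lifts canonically to an isomorphism of their idempotent completions. So I would take the isomorphism $F:\mathcal{G}'\lra\mathcal{S}'$ produced in Theorem \ref{theorem:6} and extend it to a $2$-functor $\bar F:\mathcal{G}\lra\mathcal{S}$, then check that $\bar F$ is an isomorphism by lifting the inverse of $F$ in the same way.

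Concretely, I would define $\bar F$ as follows. On $0$-cells $\bar F$ agrees with $F$, i.e.\ it is the identity on $\Z$. A $1$-morphism of $\mathcal{G}$ is a pair $(M,e)$ with $M$ a $1$-morphism of $\mathcal{G}'$ and $e\colon M\to M$ an idempotent $2$-morphism; I set $\bar F(M,e)=(F(M),F(e))$. Since $F$ is an isomorphism of categories it carries the ring of $2$-morphisms of $\mathcal{G}'$ isomorphically onto that of $\mathcal{S}'$ and in particular preserves vertical composition and identity $2$-morphisms, so $F(e)\circ F(e)=F(e\circ e)=F(e)$ shows $F(e)$ is again idempotent and $(F(M),F(e))$ is a legitimate $1$-morphism of $\mathcal{S}$. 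On $2$-morphisms, a morphism $(M,e)\to(M',e')$ in $\mathcal{G}$ is a $2$-morphism $f\colon M\to M'$ of $\mathcal{G}'$ making the defining square commute; because $F$ preserves composition, $F(e')\circ F(f)=F(e'\circ f)=F(f\circ e)=F(f)\circ F(e)$, so $F(f)$ makes the corresponding square for $(F(M),F(e))$ and $(F(M'),F(e'))$ commute and is therefore a $2$-morphism of $\mathcal{S}$; I set $\bar F(f)=F(f)$.

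It then remains to verify that $\bar F$ is a $2$-functor and an isomorphism. Preservation of identities (note $\id_{(M,e)}=e$, whence $\bar F(e)=F(e)=\id_{(F(M),F(e))}$) and of vertical composition are immediate from the corresponding properties of $F$. For invertibility I would apply the identical construction to the inverse isomorphism $F^{-1}\colon\mathcal{S}'\lra\mathcal{G}'$, obtaining a $2$-functor $\overline{F^{-1}}\colon\mathcal{S}\lra\mathcal{G}$; since the lifting $F\mapsto\bar F$ is manifestly compatible with composition of functors and sends the identity functor to the identity functor, one gets $\overline{F^{-1}}\circ\bar F=\id_{\mathcal{G}}$ and $\bar F\circ\overline{F^{-1}}=\id_{\mathcal{S}}$, and in particular $\bar F$ is bijective on $1$- and on $2$-morphisms.

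I do not expect a genuine obstacle here. The only points requiring verification are that $F$ respects the two pieces of data out of which the Karoubi envelope is assembled, namely idempotent $2$-morphisms and the commuting-square condition defining its Hom-sets, and both of these are expressed purely in terms of vertical composition of $2$-morphisms, which $F$ preserves by Theorem \ref{theorem:6}. The only mild bookkeeping is confirming that the lift $F\mapsto\bar F$ is itself functorial in $F$, so that the inverse of $F$ does indeed lift to the inverse of $\bar F$; this is the standard functoriality of idempotent completion.
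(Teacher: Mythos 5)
Your proof is correct and is exactly the argument the paper intends: the paper states Corollary \ref{kariso} without proof, treating it as the immediate consequence of Theorem \ref{theorem:6} together with the functoriality of the Karoubi envelope, which is precisely the lifting $F\mapsto\bar F$, $(M,e)\mapsto(F(M),F(e))$, that you spell out. No gap; your verification of idempotency, the commuting-square condition, and invertibility via lifting $F^{-1}$ is the standard bookkeeping left implicit in the paper.
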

We need to study the Grothendieck ring $K_0({\mathcal S})=K_0(\mathcal G)$.

From Equation (\ref{e:basic}), we see that any $1$-morphism of ${\mathcal S}'$ is isomorphic to
\begin{equation}\label{e:34}
M_{\m,\n}=\bigoplus_{i=1}^k \sP^{m_i}\sQ^{n_i}
\end{equation}
where $\m=(m_1,m_2,\cdots, m_k)$ and $\n=(n_1,n_2,\cdots, n_k)$ are $k$-tuples of non-negative integers.
Therefore a $1$-morphism of $\mathcal S$ is of the form $(M_{\m,\n}, e)$ with
$e\in R_{\m,\n}:={\End}_{{\mathcal S}'}(M_{\m,\n})$ an idempotent element.

\begin{prop}\label{prop:2}
For non-negative integers $m,n,m_i,n_i$, $i=1,2$,
\begin{itemize}
\item[(1).] $\Hom_{{\mathcal S}'}(\sP^{m_1}\sQ^{n_1}, \sP^{m_2}\sQ^{n_2})\ne 0$ if and only if
$n_2-m_2=n_1-m_1$;
\item[(2).] If $(m_1,n_1)\ne (m_2,n_2)$, then $sdeg(\alpha)>0$ for any $\alpha\in
\Hom_{{\mathcal S}'}(\sP^{m_1}\sQ^{n_1}, \sP^{m_2}\sQ^{n_2})$;
\item[(3).] The $\C$-algebra $R_{m,n}:=\End_{{\mathcal S}'}(\sP^m \sQ^n)$ is graded according to
the degree $sdeg$:
\[
R_{m,n}=\bigoplus_{i=0}^\infty R^i_{m,n}
\]
where $R_{m,n}^i=\{\alpha\in R_{m,n}\,\, |\,\,\text{$\alpha$ is homogeneous with respect to $vdeg$ and }
 vdeg(\alpha)=i/2\}$ is a finite dimensional $\C$-vector
space for $i\ge 0$, they satisfy
\[
R_{m,n}^i\cdot R_{m,n}^j\subseteqq R_{m,n}^{i+j},
\]
and $R_{m,n}^0\cong {\C}[S_m]\otimes {\C}[S_n]$.
\end{itemize}
\end{prop}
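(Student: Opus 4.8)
The plan is to reduce all three assertions to a single normal-form (basis) description of the morphism spaces of $\mathcal{S}'$, in the spirit of the basis theorems for the Heisenberg category in \cite{K}, \cite{CL} and for the deformed category in \cite{WWY}. The central step is to show that, using only the local relations (\ref{local1})--(\ref{local4}), every $2$-morphism $\sP^{m_1}\sQ^{n_1}\to\sP^{m_2}\sQ^{n_2}$ can be rewritten as a $\C$-linear combination of \emph{reduced} diagrams, in which no two strands cross more than once, there are no closed loops (each loop is removed by the evaluation (\ref{local4}), using $\tr(1)=0$ and $\tr(v)=1$), all cups are pulled to the bottom and all caps to the top, and each strand carries at most one dot (since $v^2=0$ in $\Lambda^*(V)$). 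These reduced diagrams are homogeneous, and I expect the genuine work to sit precisely here: proving that the straightening procedure terminates and that the reduced diagrams are linearly \emph{independent}, so that the degree statements below are assertions about a basis rather than merely a spanning set. Granting the basis theorem, the three parts follow by bookkeeping.

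For (1), the ``only if'' direction is a charge count. Since $\sP$ carries the $0$-cell $a$ to $a+1$ and $\sQ$ carries $a$ to $a-1$, the $1$-morphism $\sP^{m_i}\sQ^{n_i}$ runs from a fixed object to that object shifted by $m_i-n_i$; a nonzero $2$-morphism can exist only between $1$-morphisms with equal source and target, forcing $m_1-n_1=m_2-n_2$, i.e. $n_2-m_2=n_1-m_1$. For the ``if'' direction I would assume this common-charge condition and, say, $s:=m_2-m_1=n_2-n_1\ge 0$ (the case $s<0$ is symmetric, with caps in place of cups). One then builds an explicit morphism by nesting $s$ unit cups $\mathbf 1\to\sP\sQ$ between the block $\sP^{m_1}$ and the block $\sQ^{n_1}$, yielding $\sP^{m_1}\sQ^{n_1}\to\sP^{m_1+s}\sQ^{n_1+s}=\sP^{m_2}\sQ^{n_2}$. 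Nonvanishing is checked by composing with the matching nested caps and straightening via the snake relations (\ref{local3}): the composite is a nonzero scalar multiple of $\id_{\sP^{m_1}\sQ^{n_1}}$, so the cup morphism is itself nonzero.

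For (2) and (3) I would read off the degree of a reduced diagram from the generators occurring in it. A reduced endomorphism of $\sP^m\sQ^n$ lying in the lowest degree can contain no cup--cap pair, because an undecorated cup--cap closes to the zero bubble $\tr(1)=0$, so any surviving cup--cap must be $v$-decorated and hence raises the degree; thus a degree-$0$ reduced diagram consists of crossings alone. As no strand then reverses its orientation, such a diagram permutes the $m$ upward strands among themselves and the $n$ downward strands among themselves (mixed crossings being trivialized by the first relation of (\ref{local3})), and (\ref{local1}) identifies these with the actions of $S_m$ and $S_n$ on the two blocks, giving $R_{m,n}^0\cong\C[S_m]\otimes\C[S_n]$. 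The same dichotomy proves (2): if $(m_1,n_1)\ne(m_2,n_2)$, then every reduced diagram must create or destroy at least one strand pair, hence contains at least one cup or cap, placing $\alpha$ in strictly positive degree, $sdeg(\alpha)>0$. Finally, homogeneity of the relations with respect to $vdeg$ yields the decomposition $R_{m,n}=\bigoplus_{i\ge 0}R_{m,n}^i$ with $R^i_{m,n}R^j_{m,n}\subseteq R^{i+j}_{m,n}$, and each $R_{m,n}^i$ is finite dimensional because only finitely many reduced diagrams occur in a fixed degree (a bounded number of cups, caps and dots, and at most one crossing per pair of strands).

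The principal obstacle is exactly the basis theorem that underlies everything: showing that reduction by (\ref{local1})--(\ref{local4}) is confluent and that the reduced diagrams are linearly independent, so that ``degree $0$ forces a pure permutation'' and ``a change of object forces a cup or cap'' become honest statements about a basis. Once that input is secured by the methods of \cite{K}, \cite{CL} and \cite{WWY} for the analogous categories, the charge count, the snake-relation computation, and the degree bookkeeping described above are routine.
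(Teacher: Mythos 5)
Your overall route is the same as the paper's: a normal-form/spanning description of the diagrammatic Hom spaces followed by degree bookkeeping (the paper's proof is exactly this, compressed into four sentences), and your deferral of the confluence/linear-independence input to \cite{K}, \cite{CL}, \cite{WWY} is no less rigorous than what the paper itself does. However, your nonvanishing argument for the ``if'' half of (1) is genuinely wrong. Snake identities pair a clockwise cup with a \emph{counterclockwise} cap (or vice versa) along a single shared strand; they are isotopies, not the relation (\ref{local3}). Composing your nested clockwise cups $\mathbf 1\to \sP^s\sQ^s$ with the matching nested clockwise caps $\sP^s\sQ^s\to\mathbf 1$ therefore produces no snakes at all: the composite is $\id_{\sP^{m_1}\sQ^{n_1}}$ horizontally composed with $s$ nested \emph{clockwise} circles. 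Relation (\ref{local4}) evaluates only counterclockwise circles (and the undotted one is $\tr(1)=0$); the clockwise circle is not a scalar in $\End_{\Sone}(\mathbf 1)$ but a homogeneous element of strictly positive degree (under the Fock-space action it is multiplication by a multiple of $\sum_i v_i$ on the $n$-th component). In fact your claimed conclusion contradicts the grading you establish in (3): any morphism $f:\sP^{m_1}\sQ^{n_1}\to\sP^{m_2}\sQ^{n_2}$ and any morphism $g$ back have strictly positive degree by (2), so $g\circ f$ is homogeneous of positive degree and can never be a nonzero multiple of $\id$, which has degree $0$; equivalently, $\sP^{m_1}\sQ^{n_1}$ is \emph{not} a direct summand of $\sP^{m_2}\sQ^{n_2}$, so no choice of ``matching caps'' can repair the computation. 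The fix is cheap given your own framework: the nested-cup diagram is itself a reduced diagram, hence a basis element and nonzero by the basis theorem you invoke (or one quotes faithfulness of the Fock representation). Note the paper never addresses this point: its degree count only yields the ``only if'' direction of (1), together with (2).

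A second, smaller caution: your step ``the diagram contains at least one cup or cap, hence has strictly positive degree'' silently assumes that every cup and cap occurring in a reduced diagram carries positive degree. This is true for the clockwise cups and caps (the only ones that can join a $\sP$ to a $\sQ$ in a reduced diagram between objects $\sP^m\sQ^n$) under the convention of \cite{WWY}, where they have degree $\tfrac12$ each --- which is what the otherwise mysterious $i/2$ in part (3) reflects. But under the integer-valued table printed in Section \ref{section:2} of this paper the clockwise cup has $sdeg=0$, and then $\id\otimes(\text{clockwise cup})\otimes\id:\sP^m\sQ^n\to\sP^{m+1}\sQ^{n+1}$ would be a degree-zero counterexample to (2). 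Your bookkeeping goes through, but only after fixing the convention (and noting that counterclockwise cups and caps, which have negative degree, are excluded by the normal form); as written, this assumption is invisible in both your argument and the paper's.
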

\begin{proof}
The $2$-morphisms of ${\mathcal S}'$ are generated by the string diagrams in (\ref{e:32}) and
(\ref{e:32.1}) through horizontal composition, vertical composition, and $\C$ linearity. Among
the string diagrams (that is, the $2$-morphisms generated from (\ref{e:32}) and
(\ref{e:32.1}) through horizontal position and vertical composition)
we consider in this proposition, only those containing clockwise cup or
clockwise cap have shifted degree bigger than $0$, these $2$-morphisms are from $\sP^m\sQ^n
\to \sP^{m+l}\sQ^{n+l}$ for some integer $l$ and others are from $\sP^m\sQ^n
\to \sP^{m}\sQ^{n}$. Therefore (1) and (2) hold. The homogeneous $2$-morphisms of shifted degree $0$
are generated by the crossings $\theta_1$ and $\theta_2$ in (\ref{e:32}), hence
$R^0\cong {\C}[S_m]\otimes {\C}[S_n]$.
\end{proof}

An element $A\in R_{\m,\n}$ can be written as
\[
A=\left(
\begin{array}{cccc}
a_{1,1} & a_{1,2} & \cdots & a_{1,k}\\
a_{2,1} & a_{2,2} & \cdots & a_{2,k}\\
\cdots  & \cdots  & \cdots & \cdots\\
a_{k,1} & a_{k,2} & \cdots & a_{k,k}
\end{array}
\right)
\]
with $a_{i,j}\in \Hom_{{\mathcal S}'}(\sP^{m_i}\sQ^{n_i}, \sP^{m_j}\sQ^{n_j})$. We define a filtration
on $R_{\m,\n}$ by
\begin{equation}\label{filtration}
R_{\m,\n}=F_0\supsetneqq F_1\supset F_2\supset \cdots \supset F_p\supset F_{p+1}\supset \cdots
\end{equation}
satisfying $F_p\cdot F_q\subset F_{p+q}$ for $p,q\ge 0$ and $\bigcap_{p=0}^\infty F_p=0$, where
\[
F_p=\{A=(a_{i,j})\in R_{\m,\n}\,\, |\,\, sdeg(a_{i,j})\ge p/2 \text{ for $1\le i,j\le k$}\}.
\]
Moreover, $F_1$ is a two-sided ideal of the unital ring $R_{\m,\n}$, and there is the split
exact sequence:
\begin{equation}\label{splitexact}
0\lra F_1\lra R_{\m,\n}\lra R_{\m,\n}^0\lra 0
\end{equation}
where
\[
R_{\m,\n}^0=\{A=(a_{i,j})\in R_{\m,\n}\,\, |\,\, \text{ $a_{i,j}$ is homogeneous with
$vdeg(a_{i,j})=0$ for $1\le i,j\le k$}\}
\]
is a subalgebra of $R_{\m,\n}$.

\begin{lemma}[\cite{WWY}]\label{lemma:1}
Let $R$ be a unital ring containing $\Z$ with a filtration
\[
R=F_0\supsetneqq F_1\supset F_2\supset \cdots \supset F_p\supset F_{p+1} \supset\cdots
\]
such that $F_p\cdot F_q\subset F_{p+q}$ for integers $p,q\ge 0$ (therefore $F_1$ is a two-sided
ideal of $R$) and $\bigcap_{p=0}^\infty F_p = 0$. Suppose that $R_0$ is a unital subring of $R$ such that
$R_0\cong R/F_1$ and the exact sequence
\[
0\lra F_1\lra R\lra R_0\lra 0
\]
is split. Then $K_0(R)\cong K_0(R_0)$ as abelian groups.
\end{lemma}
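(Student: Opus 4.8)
The plan is to compute $K_0(R)$ and $K_0(R_0)$ through idempotent matrices and to show that the ring projection $\pi\colon R\lra R_0$ and the given ring section $\iota\colon R_0\lra R$ (the splitting of $0\lra F_1\lra R\lra R_0\lra 0$) induce mutually inverse maps on $K_0$. Recall that $K_0(S)$ of a unital ring $S$ is the group completion of the monoid of isomorphism classes of finitely generated projective left $S$-modules, equivalently the group generated by idempotents in $\bigcup_n M_n(S)$ modulo stabilization and conjugation. Since $\iota$ is a ring section of $\pi$, the induced maps satisfy $\pi_*\circ\iota_*=\id$ on $K_0(R_0)$; hence $\iota_*\colon K_0(R_0)\lra K_0(R)$ is a split injection and $\pi_*\colon K_0(R)\lra K_0(R_0)$ is a split surjection. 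It therefore suffices to prove that $\iota_*$ is surjective, i.e. that every class in $K_0(R)$ is induced from $R_0$.

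First I would reduce this to a statement about a single idempotent. Let $e\in M_n(R)$ be an idempotent, so that $[\operatorname{im}e]$ is a typical generator of $K_0(R)$, and set $\bar e=\iota\pi(e)\in M_n(R_0)\subseteq M_n(R)$. Because $R_0$ is an honest subring of $R$ (via $\iota$), the matrix $\bar e$ is \emph{already} an idempotent of $M_n(R)$, so no idempotent-lifting is needed, and $[\operatorname{im}\bar e]=\iota_*[\operatorname{im}_{R_0}\bar e]$ visibly lies in the image of $\iota_*$. Thus $\iota_*$ is surjective as soon as I show $[\operatorname{im}e]=[\operatorname{im}\bar e]$ in $K_0(R)$ for every such $e$. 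The point is that $e-\bar e$ lies in the two-sided ideal $M_n(F_1)$, on which the filtration inherited from $R$ is again multiplicative, $M_n(F_p)M_n(F_q)\subseteq M_n(F_{p+q})$, and separated, $\bigcap_p M_n(F_p)=0$.

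Next I would produce an isomorphism $\operatorname{im}e\cong\operatorname{im}\bar e$ by conjugating $e$ to $\bar e$ inside $M_n(R)$ (after stabilizing if necessary). The standard device is the element $u=\bar e\,e+(1-\bar e)(1-e)$, which satisfies $u\,e=\bar e\,e=\bar e\,u$ and differs from the identity by an element $w:=u-1\in M_n(F_1)$, since $\bar e\equiv e \pmod{F_1}$. If $u$ is invertible then $u\,e\,u^{-1}=\bar e$, and the two projective modules are isomorphic. The crux — and the step I expect to be the main obstacle — is therefore to invert $1+w$ with $w\in M_n(F_1)$: its inverse should be the series $\sum_{k\ge 0}(-w)^k$, whose partial sums form a Cauchy sequence for the filtration because $w^k\in M_n(F_k)$. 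Here the hypotheses $F_pF_q\subseteq F_{p+q}$ and $\bigcap_p F_p=0$ are exactly what is needed to make this series (and, in the approach that instead refines $e$ by the successive approximation $e\mapsto 3e^2-2e^3$, whose defining coefficients live in the assumed copy of $\Z\subseteq R$, the analogous limit) well defined and to guarantee uniqueness of the limit. The delicate point is to ensure that this inverse actually lies in $R$ rather than merely in the completion $\varprojlim R/F_p$; this is where one must invoke the concrete structure of the filtration in the case at hand, so that the approximating series effectively terminates (equivalently, $F_1$ behaves nilpotently on each relevant matrix block).

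Finally, once $[\operatorname{im}e]=[\operatorname{im}\bar e]$ is established for every idempotent $e$, the map $\iota_*$ is surjective; combined with $\pi_*\circ\iota_*=\id$ this forces $\iota_*$ and $\pi_*$ to be mutually inverse isomorphisms, giving $K_0(R)\cong K_0(R_0)$ as abelian groups, as claimed.
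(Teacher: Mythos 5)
Your strategy is the standard one, and most of it is sound: $\pi_*\iota_*=\id$ reduces everything to showing that each generator $[\operatorname{im}e]$, $e=e^2\in M_n(R)$, has the same class as $[\operatorname{im}\bar e]$ for $\bar e=\iota\pi(e)$, and your conjugating element $u=\bar e\,e+(1-\bar e)(1-e)=1+w$ with $w\in M_n(F_1)$ is the right one. The problem is exactly the step you single out: inverting $u$. This is not a deferrable technicality but a genuine gap, and it cannot be closed from the stated hypotheses. Separatedness and multiplicativity make $\sum_{k\ge 0}(-w)^k$ converge only in the completion $\varprojlim R/F_p$, and nothing forces the limit to lie in $R$; your parenthetical hope that the series ``effectively terminates'' is an \emph{additional hypothesis}, not a consequence. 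In fact the lemma as printed is false, so no argument can fill this hole. Take $R=\Q[t^2,t^3]\cong\Q[x,y]/(y^2-x^3)$, with $F_p$ spanned by the monomials of degree $\ge p$ and $R_0=\Q$. Then $\Z\subset R$, $F_pF_q\subset F_{p+q}$, $\bigcap_pF_p=0$, and $0\to F_1\to R\to R_0\to 0$ is split; yet $K_0(R)\not\cong K_0(\Q)=\Z$, because the affine cuspidal cubic has $\operatorname{Pic}(R)\cong\Q$ (the additive group of the ground field, computed from the conductor square for $R\subset\Q[t]$), the determinant $K_0(R)\to\operatorname{Pic}(R)$ is a surjective group homomorphism, and $\Z$ has no quotient isomorphic to $\Q$. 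So the place where you got stuck is precisely where the statement itself, in this generality, breaks.

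The statement becomes true --- and your proof closes verbatim --- once one adds the hypothesis that is actually available where the lemma is applied: that $F_1$ is nilpotent (equivalently here, $F_N=0$ for some $N$), or more generally that $R$ is complete with respect to the filtration. Under either assumption $u=1+w$ is invertible in $M_n(R)$ (a finite geometric series in the first case, a convergent one in the second), $ueu^{-1}=\bar e$, and $\iota_*$ is surjective; alternatively one can then simply quote the standard facts that idempotents lift along nilpotent or complete ideals and that $K_0(R)\to K_0(R/F_1)$ is then an isomorphism (this is in \cite{Ros}, which the paper already cites). That extra hypothesis does hold in the intended application (Corollary \ref{cor:2}): the algebras $R_{\m,\n}=\End_{{\mathcal S}'}(M_{\m,\n})$ are graded with finite-dimensional pieces, and since a reduced diagram with fixed boundary carries a bounded number of clockwise cups and caps while closed components evaluate to scalars, the grading is bounded, so $F_1$ is nilpotent there. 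Note also that this note offers no proof to compare against --- Lemma \ref{lemma:1} is imported from \cite{WWY} --- so the fair summary is: yours is the right proof of the corrected statement, but as a proof of the statement as written it has an unfixable gap, and the lemma should be read with ``$F_1$ nilpotent'' or ``$R$ filtration-complete'' added to its hypotheses.
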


\begin{cor}\label{cor:2}
Let $R_{\m,\n}$ and $R^0_{\m,\n}$ be as above, and
\[
X=\{(m_i,n_i)\,\,|\,\, \m=(m_1,m_2,\cdots,m_k), \n=(n_1,n_2,\cdots,n_k), 1\le i\le k\}
\]
the set of different tuples $(m_i,n_i)$ appearing in the decomposition of $M_{\m,\n}$ in (\ref{e:34}).
Then
\begin{equation}\label{e:36}
K_0(R_{\m,\n})\cong \bigoplus_{(m,n)\in X} K_0(\End_{{\mathcal S}'}(\sP^m\sQ^n))
\cong \bigoplus_{(m,n)\in X} K_0(\C[S_m]\otimes {\C}[S_n]).
\end{equation}
\end{cor}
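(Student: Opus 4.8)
The plan is to reduce the computation of $K_0(R_{\m,\n})$ to a purely combinatorial question about the degree-$0$ part $R^0_{\m,\n}$ and then to recognize the latter as a finite product of matrix algebras over the group rings $\C[S_m]\otimes\C[S_n]$. First I would invoke Lemma \ref{lemma:1}: the filtration (\ref{filtration}) satisfies $F_p\cdot F_q\subset F_{p+q}$ and $\bigcap_p F_p=0$, and (\ref{splitexact}) exhibits $R^0_{\m,\n}\cong R_{\m,\n}/F_1$ as a unital subring for which the exact sequence splits. Hence $K_0(R_{\m,\n})\cong K_0(R^0_{\m,\n})$ as abelian groups, and it remains only to understand the ring $R^0_{\m,\n}$.

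Next I would determine the algebra structure of $R^0_{\m,\n}$. Writing an element as a matrix $(a_{i,j})$ with $a_{i,j}\in\Hom_{{\mathcal S}'}(\sP^{m_i}\sQ^{n_i},\sP^{m_j}\sQ^{n_j})$, Proposition \ref{prop:2}(2) shows that whenever $(m_i,n_i)\ne(m_j,n_j)$ every such morphism has $sdeg>0$, so its degree-$0$ component vanishes. Thus in $R^0_{\m,\n}$ only the entries indexed by summands sharing a common pair $(m,n)$ survive, and $R^0_{\m,\n}$ is block-diagonal along the partition of $\{1,\dots,k\}$ by the value of $(m_i,n_i)$. If the pair $(m,n)$ occurs with multiplicity $d_{(m,n)}$, the corresponding block is the matrix algebra $M_{d_{(m,n)}}\!\left(R^0_{m,n}\right)$, and by Proposition \ref{prop:2}(3) we have $R^0_{m,n}\cong\C[S_m]\otimes\C[S_n]$. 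Therefore, as $\C$-algebras,
\[
R^0_{\m,\n}\;\cong\;\bigoplus_{(m,n)\in X} M_{d_{(m,n)}}\!\left(\C[S_m]\otimes\C[S_n]\right).
\]

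Finally I would apply the two standard facts that $K_0$ carries a finite direct product of rings to the direct sum of their $K_0$'s and that $K_0$ is Morita invariant, so that $K_0(M_d(A))\cong K_0(A)$. Combining these with the displayed isomorphism yields
\[
K_0(R_{\m,\n})\cong K_0(R^0_{\m,\n})\cong\bigoplus_{(m,n)\in X}K_0\!\left(\C[S_m]\otimes\C[S_n]\right),
\]
which is the second isomorphism of (\ref{e:36}). The first isomorphism follows by applying Lemma \ref{lemma:1} once more, now to each single-summand ring $\End_{{\mathcal S}'}(\sP^m\sQ^n)=R_{m,n}$ with its own shifted-degree filtration, giving $K_0(\End_{{\mathcal S}'}(\sP^m\sQ^n))\cong K_0(R^0_{m,n})\cong K_0(\C[S_m]\otimes\C[S_n])$ term by term. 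The one genuinely substantive step is the block-diagonalization of $R^0_{\m,\n}$: everything hinges on the vanishing of degree-$0$ morphisms between summands attached to distinct pairs, namely Proposition \ref{prop:2}(2), after which the matrix-algebra and Morita bookkeeping is routine.
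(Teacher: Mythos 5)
Your proof is correct and follows essentially the same route as the paper's: Lemma \ref{lemma:1} applied to the filtration (\ref{filtration}) and the split sequence (\ref{splitexact}), block-diagonalization of $R^0_{\m,\n}$ via Proposition \ref{prop:2}, Morita invariance of $K_0$, and the identification $R^0_{m,n}\cong\C[S_m]\otimes\C[S_n]$. The only difference is that you make explicit two points the paper leaves implicit, namely the vanishing of degree-zero entries between distinct pairs $(m_i,n_i)$ that yields the block decomposition, and the first isomorphism of (\ref{e:36}) obtained by applying Lemma \ref{lemma:1} again to each single ring $R_{m,n}$.
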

\begin{proof}
By (\ref{filtration}) and (\ref{splitexact}), we can apply Lemma \ref{lemma:1}, hence
$
K_0(R_{\m,\n})\cong K_0(R^0_{\m,\n}).
$
Proposition \ref{prop:2} implies that
\[
K_0(R^0_{\m,\n})\cong \bigoplus_{(m,n)\in X} K_0(M_{s(m,n)}(R^0_{m,n}))
\]
where $s(m,n)=\#\{i\,\, | \,\, (m_i,n_i)=(m,n) \text{ for $1\le i\le k$}\}$ and
$M_{s(m,n)}(R^0_{m,n})$ denotes the matrix ring over the ring $R^0_{m,n}$ of rank
$s(m,n)$. By the ``Morita invariance" theorem (see Chapter 1 of \cite{Ros}),
$K_0(M_{s(m,n)}(R^0_{m,n}))\cong K_0(R^0_{m,n})$, and from Part 3 of Proposition \ref{prop:2},
we have $R^0_{m,n}\cong {\C}[S_m]\otimes {\C}[S_n]$. Therefore the second isomorphism
of (\ref{e:36}) holds.
\end{proof}

\begin{prop}\label{prop:3}
Let $\sP_m=(\sP^n, e_{(n)})$ and $\sQ_n=(\sQ^n, e_{(n)})$ for $n\in {\Z}$ non-negative.
The Grothendieck ring $K_0({\mathcal S})$ is generated by the classes
$[\sP_n],\, [\sQ_n]$ for $n=0, 1, 2, \cdots$ as an algebra over $\Z$.
\end{prop}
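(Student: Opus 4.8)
The plan is to compute $K_0(\mathcal S)$ through the indecomposable objects of $\mathcal S$ and then to recognise the $\sP$- and $\sQ$-parts as the ring of symmetric functions, in which the complete homogeneous generators are exactly the classes $[\sP_n]$ and $[\sQ_n]$.

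First I would reduce everything to idempotents. By (\ref{e:34}) every object of $\mathcal S$ is isomorphic to some $(M_{\m,\n},e)$ with $e$ an idempotent of $R_{\m,\n}=\End_{{\mathcal S}'}(M_{\m,\n})$, so $K_0(\mathcal S)$ is spanned by the classes of such objects, and the indecomposable summands correspond to primitive idempotents of $R_{\m,\n}$. By Corollary \ref{cor:2} we have $K_0(R_{\m,\n})\cong\bigoplus_{(m,n)\in X}K_0(\C[S_m]\otimes\C[S_n])$, and since $\C[S_m]\otimes\C[S_n]$ is semisimple each summand is free abelian on the simple modules $V_\lambda\boxtimes V_\mu$, realised by the primitive idempotents $e_\lambda\otimes e_\mu$. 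The split exact sequence (\ref{splitexact}) together with Lemma \ref{lemma:1} lets me lift these primitive idempotents from $R^0_{\m,\n}$ to $R_{\m,\n}$ (no further splitting is possible because $\bigcap_p F_p=0$), so up to isomorphism the indecomposable objects of $\mathcal S$ are exactly $\sP_\lambda\sQ_\mu:=(\sP^m\sQ^n,e_\lambda\otimes e_\mu)$, and therefore $K_0(\mathcal S)=\bigoplus_{\lambda,\mu}\Z\,[\sP_\lambda\sQ_\mu]$ as an abelian group, with $\lambda,\mu$ ranging over all partitions.

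Since $\sP_\lambda\sQ_\mu=(\sP^m,e_\lambda)\cdot(\sQ^n,e_\mu)=\sP_\lambda\cdot\sQ_\mu$ as a horizontal (monoidal) composite, we have $[\sP_\lambda\sQ_\mu]=[\sP_\lambda]\,[\sQ_\mu]$ in $K_0(\mathcal S)$, so it suffices to prove that each $[\sP_\lambda]$ lies in the $\Z$-subalgebra generated by the $[\sP_n]$, and symmetrically for $[\sQ_\mu]$. For this I would identify the pure-$\sP$ tower with the representation theory of the symmetric groups: by Proposition \ref{prop:2} the degree-zero endomorphisms are $\End_{{\mathcal S}'}(\sP^n)^0\cong\C[S_n]$ and there are no degree-zero morphisms between $\sP^n$ and $\sP^{n'}$ for $n\ne n'$, while horizontal composition $\sP^n\cdot\sP^{n'}=\sP^{n+n'}$ induces the standard inclusion $\C[S_n]\otimes\C[S_{n'}]\hookrightarrow\C[S_{n+n'}]$; hence $(\sP^{m+n},e_\lambda\otimes e_\mu)$ realises the induction product $\Ind_{S_m\times S_n}^{S_{m+n}}(V_\lambda\boxtimes V_\mu)$ and $[\sP_\lambda][\sP_\nu]=\sum_\kappa c^\kappa_{\lambda\nu}[\sP_\kappa]$ with $c^\kappa_{\lambda\nu}$ the Littlewood--Richardson coefficients. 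Thus $[\sP_n]\mapsto h_n$, $[\sP_\lambda]\mapsto s_\lambda$ is a ring isomorphism onto the symmetric functions, and the Jacobi--Trudi identity $s_\lambda=\det\bigl(h_{\lambda_i-i+j}\bigr)_{1\le i,j\le\ell(\lambda)}$ exhibits each $[\sP_\lambda]$ as an integral polynomial in the $[\sP_n]$.

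Combining the two steps, every basis element $[\sP_\lambda\sQ_\mu]=[\sP_\lambda]\,[\sQ_\mu]$ of $K_0(\mathcal S)$ is a $\Z$-polynomial in the $[\sP_n]$ and $[\sQ_n]$, which proves the proposition. The idempotent lifting of the first step is routine given the filtration (\ref{filtration}) and Lemma \ref{lemma:1}, and the final inversion is the classical passage between the $h$- and $s$-bases; the main obstacle is the middle step, namely checking that the monoidal product of $\mathcal S$ together with the Young-symmetrizer decomposition really categorifies the induction product, so that the multiplication of the classes $[\sP_\lambda]$ matches that of the Schur functions and the $h_n=[\sP_n]$ indeed generate.
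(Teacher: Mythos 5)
Your argument is correct and follows essentially the same route as the paper: reduce every class to $[(M_{\m,\n},e)]$, apply Corollary \ref{cor:2} (via the filtration (\ref{filtration}), the split sequence (\ref{splitexact}), and Lemma \ref{lemma:1}) to land in $\bigoplus_{(m,n)}K_0(\C[S_m]\otimes\C[S_n])$, and then write each $[\sP_\lambda]$ (resp.\ $[\sQ_\mu]$) as an integer polynomial in the $[\sP_n]$ (resp.\ $[\sQ_n]$) --- your Littlewood--Richardson plus Jacobi--Trudi step being just an explicit form of the fact the paper cites from \cite{FH}, \cite{M}, namely that inductions of trivial representations span all virtual representations of $S_n$. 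The only caveat is that you over-claim in passing: the classification of indecomposable objects and the freeness of $K_0(\mathcal{S})$ on the classes $[\sP_\lambda\sQ_\mu]$ do not follow from Lemma \ref{lemma:1} alone, but these claims are not needed, since generation only requires spanning, which your argument does establish.
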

\begin{proof}
Any element of $K_0({\mathcal S})$ can be written as $[(M_{\m,\n}, e)]$, where $M_{\m,\n}$ is
given in (\ref{e:34}) and $ e\in R_{\m,\n}=\End_{{\mathcal S}'}(M_{\m,\n})$ is an
idempotent. Corollary \ref{cor:2} implies that $[(M_{\m,\n}, e)]$ is a ${\Z}$-linear
combination of the terms $[(\sP^m,e_{\lambda})]\cdot [(\sQ^n, e_{\mu})]$
as elements of $K_0({\C}[S_m]\otimes {\C}[S_n])$, where $\lambda$
is a partition of the integer $m$ and $\mu$ is a partition of the integer $n$.
As the inductions of the trivial representations of $S_{n_1}\times S_{n_2}\times \cdots
\times S_{n_l}\hookrightarrow S_n$ for $n_1+n_2+\cdots n_l=n$ generate all virtual
representations of $S_n$ (see \cite{FH} or \cite{M}), we see that
\[[(\sP^m, e_{\lambda})]=\sum_{i_1+i_2+\cdots +i_l=m} c_{i_1,i_2,\cdots, i_l}[\sP_{i_1}]\cdot
[\sP_{i_2}]\cdots [\sP_{i_l}]
\]
with $c_{i_1,i_2,\cdots, i_l}\in {\Z}$ and a similar formula holds for $[\sQ^n,e_{\mu}]$.
Therefore the Grothendieck ring $K_0({\mathcal S})$ is generated
by $[\sP_n], [\sQ_n]$ for $n=0,1,2,\cdots$.
\end{proof}

\noindent{\bf Proof of Theorem \ref{theorem:main}}. By Theorem \ref{theorem:4} of Khovanov, we
need only prove the surjectivity of $\gamma$, that is, for any given
$1$-morphism $(M,e)$ of ${\mathcal H}$, where $M$ is a $1$-morphism of ${\H}'$ and
$e\in {\EndHp}(M)$ is
an idempotent $2$-morphism, the class $[(M,e)]$ can be written as a $\Z$-linear
combination of $[{\mathcal P}_{a_1}]\cdot\cdots\cdot
[{\mathcal P}_{a_i}]\cdot [{\mathcal Q}_{b_1}]\cdot\cdots\cdot [{\mathcal Q}_{b_j}]$'s,
with $a_1, \cdots, a_i$, and $b_1, \cdots, b_j$ equal to non-negative integers.

Due to the decomposition ${\mathcal Q}{\mathcal P}\cong {\mathcal P}{\mathcal Q}\oplus {\mathbf 1}$
in the category $\H'$, the $1$-morphism $M$ can be expressed as
\[
M\cong \bigoplus_{i=1}^k {\mathcal P}^{m_i} {\mathcal Q}^{n_i}.
\]
Under the injective ring homomorphism
$\psi: {\tilde{\mathcal R}}_{\H'}\lra {\tilde{\mathcal R}}_{{\mathcal G}'}$ in Proposition \ref{prop:1},
we have the isomorphism $\EndHp(M)\cong {\End}_{{\mathcal G}'}(\tilde{M})$ in Corollary \ref{cor:1}, where
$\tilde{M}=\bigoplus_{i=1}^k P^{m_i} Q^{n_i}$. The $2$-category ${\mathcal G}'$ is
isomorphic to ${\mathcal S}'$ by Theorem \ref{theorem:6}. Under this isomorphism, $\tilde{M}$
corresponds to
\[
M_{\m,\n}=\bigoplus_{i=1}^k \sP^{m_i} \sQ^{n_i}
\]
and we get
\[
{\EndHp}(M)\xrightarrow{\cong} {\End}_{{\mathcal G}'}(\tilde{M})\xrightarrow{\cong}
{\End}_{{\mathcal S}'}(M_{\m,\n}) = R_{\m,\n}.
\]
After taking the Grothendieck groups in the above isomorphisms and applying
Proposition \ref{prop:3}, we see that the class $[(M,e)]$ can be written as a $\Z$-linear
combination of $[{\mathcal P}_{a_1}]\cdot\cdots\cdot
[{\mathcal P}_{a_i}]\cdot [{\mathcal Q}_{b_1}]\cdot\cdots\cdot [{\mathcal Q}_{b_j}]$'s,
with $a_1, \cdots, a_i$, and $b_1, \cdots, b_j$ equal to non-negative integers.
\hfill $\square$

\begin{Remark}\label{remark:proofmain}
We can also use the same method as in \cite{WWY} to prove the injectivity of  $\gamma$.
Let $H=H_{\Z}\otimes {\C}$ be the Heisenberg algebra with coefficients in the
complex number field $\C$. Then $H$ is generated by $\{a_n\}_{n\in \Z}$ with the defining
relation (\ref{e:1.1}), or by $\{p_n, q_n\}_{n\in {\mathbb N}}$ with the defining relations
(\ref{e:1.4})--(\ref{e:1.6}), in accordance with (\ref{e:pqrelation}). An element of
$H_{\Z}$ is mapped to $K_0(\H)$ by $\gamma$, which can be viewed as an
endomorphism of the Fock space representation of $H$ constructed from the generators
$a_n$ of $H$, by the work in \cite{FJW1} and \cite{FJW2}. As the Fock space representation is
faithful, we get the injectivity of $\gamma$, see \cite{WWY} for detail.
\end{Remark}

\section{Further Discussion}\label{section:5}
The category ${\mathcal G}'$ and its Karoubi envelope $\mathcal G$ are constructed from
the category $\H'$ by requiring that ${\mathcal G}'\ni P= \mcP\in {\H'}$ and
${\mathcal G}'\ni Q=\mcQ\oplus \mcQ\in{\H'}$ on the $1$-morphisms, and
string diagrams be given in (\ref{e:29}) and (\ref{e:29.1})
to construct all the $2$-morphisms by horizontal compositions and vertical compositions.
As a subcategory of $'\H'$,  the category ${\mathcal S}'$ is basically the same as $'\H'$:
the missing $1$-morphisms from $'\H'$ are the dimension shiftings of the $1$-morphisms of
${\mathcal S}'$, the missing $2$-morphisms from $'\H'$ are some of those with dot $2$-morphisms.
Theorem \ref{theorem:6} says that the two categories ${\mathcal G}'$ and ${\mathcal S}'$ are
isomorphic. From such reasoning, the categorification $'\H$ of the deformed Heisenberg algebra
$'H$ is obtained from the categorification $\H$ of $H$ as  the $1$-morphism $\sQ$
of $'\H$ ``splits'' into two parts: $\sQ\cong \sQ_1\oplus \sQ_2$. We explain this
from the point of view of representation theory, with the idea based on the Fock space
representations of the Heisenberg categorifications in \cite{K} and \cite{CL}.

Let $A={\C}[x]$ be the polynomial ring of one variable over $\C$, $A_0={\C}$, and for
an integer $n\ge 1$
\[
A_n=\underbrace{(A\otimes A\otimes \cdots \otimes A)}_{\text{$n$ times}}\rtimes S_n
\]
be the ${\C}$-algebra with the obvious addition and the multiplication given by
\[
(f,\sigma)\cdot (g,\tau):= (f\cdot \sigma(g),\sigma\tau)
\]
for $f,g\in A\otimes A\otimes\cdots\otimes A$, and $\sigma,\tau\in S_n$, where $\sigma(g)$
denotes the action of $S_n$ on $A\otimes A\otimes \cdots\otimes A$ by the permutation of
$\sigma$ on the $n$ tensor components of $g$.  Let $D(\An\gmod)$
be the bounded derived category of graded (left) $\An$-modules.
Define the category $\mathcal F$ by
\[
{\mathcal F}:= \bigoplus_{n=0}^\infty D(\An\gmod)
\]
with morphisms as the functors among the categories $D(\An\gmod)$.
There are endofunctors $\mathbf 1$, $\sP$, $\sQ$ of $\mathcal F$ given in the following.
\begin{itemize}
\item $\mathbf 1$ is the identity endofunctor of $\mathcal F$.
\item $\sP: {\mathcal F}\to {\mathcal F}$ is defined by
\[
\begin{array}{rrll}
\sP(n): &D(\An\gmod)&\to     &D(A_{n+1}\gmod)\\
        &M          &\mapsto &A_{n+1}\otimes_{\An\otimes A}(\An\otimes{\C})\otimes_{\An} M
\end{array}
\]
where we use the natural inclusion $\An\otimes A\hookrightarrow A_{n+1}$ from the
embedding of the group $S_n\cong S_n\times S_1\hookrightarrow S_{n+1}$ in the tensor product.
\item Similarly, $\sQ: {\mathcal F}\to {\mathcal F}$ is defined by
\[
\begin{array}{rrll}
(n)\sQ: &D(A_{n+1}\gmod) & \to  & D(\An\gmod)\\
        &M &\mapsto &(\An\otimes {\C})\otimes_{\An\otimes A}A_{n+1}\otimes_{A_{n+1}} M.
\end{array}
\]
\end{itemize}
In the above definition of the endofunctors $\sP$ and $\sQ$, the values $\sP(f)$ and
$\sQ(g)$ for morphisms $f,g$ of $\mathcal F$ are defined in the obvious way.

There is the free $A$-module resolution of $\C$:
\begin{equation}\label{resofC}
0\lra xA\lra A\lra {\C}\lra 0.
\end{equation}
Hence we get the free ${A\otimes A\otimes\cdots\otimes A}$ ($n$ times) module
resolution of $\C$ for each $n\ge 0$
\[
0\lra xA\otimes xA\otimes\cdots\otimes xA\lra \bigoplus_{i=1}^n xA\otimes\cdots\otimes A\otimes\cdots\otimes xA
\lra\cdots\lra A\otimes A\otimes\cdots\otimes A\lra {\C}\lra 0
\]
by tensoring the free $A$-module resolution (\ref{resofC}) of $\C$ for $n$ times, and this gives rise to
a free resolution of $\C[S_n]$ as an $A_n$-module:
\begin{equation}\label{e:38}
0\lra P_n\lra P_{n-1}\lra \cdots \lra P_1\lra P_0\lra {\C}[S_n]\lra 0,
\end{equation}
where each $P_i$ is a free $A_n$-module. Moreover, for each partition $\lambda$ of $n$, let $e_{\lambda}$
be the idempotent in ${\C}[S_n]$ associated to $\lambda$, then the corresponding irreducible representation
of $S_n$ is ${\C}[S_n]e_{\lambda}$, and we have the projective $A_n$-module resolution of
${\C}[S_n]e_{\lambda}$:
\begin{equation}\label{e:39}
0\lra P^{\lambda}_n\lra P^{\lambda}_{n-1}\lra \cdots\lra P^{\lambda}_1\lra P^{\lambda}_0\lra
{\C}[S_n]e_{\lambda}\lra 0.
\end{equation}

Through the natural mapping $A\lra {\C}\cong A/xA$, a ${\C}[S_n]$-module is equipped with the natural
$A_n$-module structure in which the action of $x$ is $0$. We regard the category ${\C}[S_n]\dmod$ as the
full subcategory of $D(A_n\gmod)$ concentrated at $0$, and let
\[
\text{${\mathcal C}_n = $ the full
subcategory of $D(A_n\gmod)$ with objects given by \,\,\,}
\bigoplus_{i=1}^k \,M_i\,[l_i],
\]
where each $M_i$ is an object of ${\C}[S_n]\dmod$. Then we put
\[
\bar{{\mathcal F}}:= \bigoplus_{n\ge 0} \,\,\,{\mathcal C}_n.
\]
The functors $\mathbf 1$, $\sP$, $\sQ$ restrict to $\bar{\mathcal F}$, and we can also define the
natural transformations between the functors generated by $\mathbf 1$, $\sP$, $\sQ$, as in
\cite{WWY} or in \cite{CL} for more general setting.

\begin{prop}\label{prop:4}
The endofunctor $\sQ$ of $\bar{\mathcal F}$ breaks into two parts: $\sQ=\sQ_1\oplus \sQ_2$ such
that they satisfy
\begin{align}
&\sQ_1\sP\cong \sP\sQ_1\oplus {\mathbf 1}, \label{e:40}\\
&\sQ_2\sP\cong \sP\sQ_2\oplus {\mathbf 1}\lba 1\rba. \label{e:41}
\end{align}
\end{prop}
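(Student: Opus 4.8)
The plan is to read the splitting of $\sQ$ off the free resolution (\ref{resofC}) of $\C$ over $A=\C[x]$, and then to match the two resulting summands against the two correction terms in the basic relation $\sQ\sP\cong\sP\sQ\oplus\mathbf 1\oplus\mathbf 1\lba 1\rba$ of (\ref{e:qpdecomp}). First I would make the derived nature of $\sQ$ explicit. Tensoring (\ref{resofC}) with $\An$ resolves the right $\An\otimes A$-module $\An\otimes\C$, and since $xA\cong A\lba 1\rba$ as graded $A$-modules (recall $\deg x=1$), the functor $\sQ$ applied to an object $M$ of $\bar{\mathcal F}$ is computed by the two-term complex
\[
\bigl[\,M\lba 1\rba \xrightarrow{\;x\;} M\,\bigr],
\]
with $x$ acting as multiplication by the last tensor variable. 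Every object of $\bar{\mathcal F}$ is assembled from $\C[S_n]$-modules on which $x$ acts as $0$, so this differential vanishes and the complex splits. This yields the decomposition $\sQ=\sQ_1\oplus\sQ_2$, where $\sQ_1$ is the contribution of the quotient term $A$ and $\sQ_2$ that of the subterm $xA\cong A\lba 1\rba$; recording the homological shift of the resolution by the dimension shift of $\bar{\mathcal F}$, we have $\sQ_2\cong\sQ_1\lba 1\rba$.

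Next I would compute $\sQ\sP$ by the Mackey decomposition for $\An\hookrightarrow A_{n+1}$. The double cosets $S_n\backslash S_{n+1}/S_n$ comprise exactly two classes: that of the identity and that of any permutation moving the new point $n+1$. On the off-diagonal class $\sP$ and $\sQ$ act on distinct tensor factors and hence commute; since this piece never touches the factor on which $\sQ$ resolves $\C$, the splitting of $\sQ$ is carried across $\sP$ unchanged and contributes $\sP\sQ_1\oplus\sP\sQ_2$. On the diagonal class $\sQ$ removes precisely the strand just created by $\sP$, so on the shared factor the composite computes the derived self-intersection $\C\otimes^{\mathbf L}_A\C$. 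Applying (\ref{resofC}) once more gives $\mathrm{Tor}^A_0(\C,\C)=\C$ and $\mathrm{Tor}^A_1(\C,\C)=\C\lba 1\rba$, so the diagonal term is $\mathbf 1\oplus\mathbf 1\lba 1\rba$, in agreement with (\ref{e:qpdecomp}).

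The crux is then to see that the diagonal contribution distributes over $\sQ=\sQ_1\oplus\sQ_2$ in lockstep with the two $\mathrm{Tor}$-groups: the $A$-term defining $\sQ_1$ pairs with $\mathrm{Tor}^A_0(\C,\C)=\C$ to yield $\mathbf 1$, while the $A\lba 1\rba$-term defining $\sQ_2$ pairs with $\mathrm{Tor}^A_1(\C,\C)=\C\lba 1\rba$ to yield $\mathbf 1\lba 1\rba$. Granting this, I obtain
\[
\sQ_1\sP\cong\sP\sQ_1\oplus\mathbf 1,\qquad \sQ_2\sP\cong\sP\sQ_2\oplus\mathbf 1\lba 1\rba,
\]
which is the claim. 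A purely formal shortcut avoids re-examining the diagonal term: from $\sQ_2\cong\sQ_1\lba 1\rba$ one has $\sQ_2\sP\cong(\sQ_1\sP)\lba 1\rba$ and $\sP\sQ_2\cong(\sP\sQ_1)\lba 1\rba$, so writing $\sQ_1\sP\cong\sP\sQ_1\oplus X$ and summing reproduces $\sP\sQ\oplus X\oplus X\lba 1\rba$; comparing with $\sP\sQ\oplus\mathbf 1\oplus\mathbf 1\lba 1\rba$ from (\ref{e:qpdecomp}) and cancelling in the graded Krull--Schmidt setting forces $X\cong\mathbf 1$.

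The step I expect to be the main obstacle is exactly this compatibility: one must check that the natural isomorphism realizing the diagonal Mackey summand respects the decomposition $\sQ=\sQ_1\oplus\sQ_2$, so that the homological filtration by (\ref{resofC}) is matched strand-for-strand by the two pieces of $\C\otimes^{\mathbf L}_A\C$, with the grading shift $\lba 1\rba$ tracked consistently against the homological shift. This is careful bookkeeping of gradings and coset representatives rather than a conceptual difficulty, and the formal cancellation above serves as a clean fallback.
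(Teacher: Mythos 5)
Your proof is correct, but it takes a genuinely different route from the paper's. The paper never invokes Mackey theory or bimodule kernels: since every object of $\bar{\mathcal F}$ is a finite direct sum of shifted irreducibles $V_{\lambda}$, it simply evaluates the functors on each $V_{\lambda}$, using the resolutions (\ref{resofC}), (\ref{e:39}) and the classical branching rules (\ref{e:42}) to obtain $\sP(V_{\lambda})\cong\bigoplus_{\mu\succ\lambda}V_{\mu}$ and $\sQ(V_{\lambda})\cong\bigoplus_{\mu\prec\lambda}V_{\mu}\oplus\bigoplus_{\mu\prec\lambda}V_{\mu}\lba 1\rba$ (its (\ref{e:43})); it then \emph{defines} $\sQ_1$ and $\sQ_2$ by splitting these values according to the shift, so that (\ref{e:40})--(\ref{e:41}) reduce to the Young-lattice fact that a partition has exactly one more addable cell than removable cells (a combinatorial step the paper leaves implicit in ``now follow''). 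You instead work at the level of kernels, in the style of \cite{K} and \cite{CL}: the splitting $\sQ=\sQ_1\oplus\sQ_2$ comes intrinsically from the two-term resolution (\ref{resofC}) with vanishing differential (hence is manifestly functorial, whereas the paper only specifies $\sQ_1,\sQ_2$ by their values on irreducibles), the composite $\sQ\sP$ is decomposed by $S_n\backslash S_{n+1}/S_n$ double cosets, and the diagonal term is identified with $\C\otimes^{\mathbf L}_A\C$, whose two Tor groups match $\sQ_1$ and $\sQ_2$ degree for degree. What each buys: your argument makes explicit both the relation $\sQ\sP\cong\sP\sQ\oplus\mathbf 1\oplus\mathbf 1\lba 1\rba$ of (\ref{e:qpdecomp}) in $\bar{\mathcal F}$ and the compatibility of the coset decomposition with the splitting of $\sQ$ --- exactly the points the paper glosses over --- at the price of heavier machinery; the paper's route is more elementary, resting only on finite-group representation theory, and both share the same looseness about naturality (isomorphisms of functors are really checked objectwise). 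One remark on your ``formal shortcut'': cancelling $(1+t)$ in the split Grothendieck group is legitimate here, since shifts act freely on indecomposables and every class has finite support, but as written you presuppose $\sQ_1\sP\cong\sP\sQ_1\oplus X$, i.e.\ that $\sP\sQ_1$ is already a summand of $\sQ_1\sP$, which itself uses the coset/splitting compatibility; the shortcut becomes a genuine fallback if you instead cancel directly from $\sQ_1\sP\oplus(\sQ_1\sP)\lba 1\rba\cong\sP\sQ_1\oplus(\sP\sQ_1)\lba 1\rba\oplus\mathbf 1\oplus\mathbf 1\lba 1\rba$, which the same Krull--Schmidt argument permits.
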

\begin{proof}
On the category ${\mathcal C}_0$ of finite dimensional vector space over $\C$, it's easy to see
that for $V_0=\C$
\[
\sP(V_0)= \text{ the complex in ${\mathcal C}_1: 0\lra xA\lra A\lra 0$,}\quad
\text{ and $\sQ(V_0)=0$ by definition}.
\]

On the category ${\mathcal C}_1$, the object $V_1:=\C$ in the category ${\mathcal C}_1$ is
represented by its resolution (\ref{resofC}). It is clear that $\sP(V_1)$ is the complex
in ${\mathcal C}_2$:
\[
0\lra P_2\lra P_1 \lra P_0\lra 0
\]
with $P_2, P_1, P_0$ given as
\begin{align*}
&P_2=(xA\otimes xA)\rtimes S_2, \quad
P_1=(xA\otimes A)\rtimes S_2 \oplus (A\otimes xA)\rtimes S_2, \quad
P_0=(A\otimes A)\rtimes S_2,
\end{align*}
and $\sQ(V_1)$ is the complex in ${\mathcal C}_0$:
\[
0\lra {\C}x \xrightarrow{d_1} {\C} \xrightarrow{d_0} 0,
\]
where both $d_1$ and $d_0$ are zero maps, thus
\[
\sQ(V_1)\cong V_0\oplus V_0\lba 1\rba, \quad\text{ where $V_0={\C}$.}
\]

On the category ${\mathcal C}_n$ for $n\ge 1$, let $V_{\lambda}$ be the irreducible representation
of $S_n$ associated to a given partition $\lambda$ of $n$. Then we know that
\begin{equation}\label{e:42}
{\C}[S_{n+1}]\otimes_{{\C}[S_n]} V_{\lambda}\cong \bigoplus_
{\substack{\mu\succ\lambda,\,\\ |\mu|=n+1}}V_{\mu},
\quad \text{ and }\quad
{\rm Res}^{S_n}_{S_{n-1}} V_{\lambda}\cong \bigoplus_
{\substack{\mu\prec\lambda,\\|\mu|=n-1}} V_{\mu}.
\end{equation}
Write the irreducible representation $V_{\lambda}$ of $S_n$ as a complex in ${\mathcal C}_n$
and the irreducible representations $V_{\mu}$ of $S_{n+1}$ (or $S_{n-1}$) as complexes in
${\mathcal C}_{n+1}$ (or ${\mathcal C}_{n-1}$) by using the resolution (\ref{e:39}), we can deduce
that
\begin{equation}\label{e:43}
\sP(V_{\lambda})\cong \bigoplus_
{\substack{\mu\succ\lambda,\,\\ |\mu|=n+1}}V_{\mu},\quad\text{ and } \quad
\sQ(V_{\lambda})\cong \bigoplus_
{\substack{\mu\prec\lambda,\\|\mu|=n-1}} V_{\mu}\,\, \bigoplus \bigoplus_
{\substack{\mu\prec\lambda,\\|\mu|=n-1}} V_{\mu}\lba 1\rba,
\end{equation}
in the category $\bar{\mathcal F}$. We define two endofunctors $\sQ_1$ and $\sQ_2$ of
$\bar{\mathcal F}$ as
\[
\sQ_1(n),\,\sQ_2(n): {\mathcal C}_n\lra {\mathcal C}_{n-1}
\]
by
\[
\sQ_1(n)(V_{\lambda}):= \bigoplus_
{\substack{\mu\prec\lambda,\\|\mu|=n-1}} V_{\mu}, \quad\text{ and }\quad
\sQ_2(n)(V_{\lambda}):= \bigoplus_
{\substack{\mu\prec\lambda,\\|\mu|=n-1}} V_{\mu}\lba 1\rba,
\]
in accordance with the second isomorphism of (\ref{e:43}). The isomorphisms (\ref{e:40})
and (\ref{e:41}) now follow the above definition of $\sQ_1$ and $\sQ_2$.
\end{proof}

\begin{cor}\label{cor:4}
There is a representation of $\H$ which sends all $0$ cells to $\bar{\mathcal F}$,
sends the $1$-cells $\mathcal P$ to $\sP$ and ${\mathcal Q}$ to $\sQ_2$.
\end{cor}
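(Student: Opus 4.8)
The plan is to promote the object-level isomorphism of Proposition~\ref{prop:4} to a full $2$-representation of $\H$ on $\bar{\mathcal F}$, drawing every required natural transformation from the categorical action of $'\H'$ on $\bar{\mathcal F}$. I would begin by recalling that the endofunctors ${\mathbf 1},\sP,\sQ$ of $\bar{\mathcal F}$, together with the dot endomorphism labelled by $v$, already organize into a representation of $'\H'$ in which each generating $2$-morphism (crossing, cup, cap, dot) is realized by an explicit natural transformation and all the local relations (\ref{local1})--(\ref{local4}) hold; this is the Fock space action of \cite{WWY} and \cite{CL}. On $1$-cells I set $\mcP\mapsto\sP$ and $\mcQ\mapsto\sQ_2$. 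This is legitimate by Proposition~\ref{prop:4}: since (\ref{e:41}) gives $\sQ_2\sP\cong\sP\sQ_2\oplus {\mathbf 1}\lba 1\rba$ and $\H$ is ungraded, the shift disappears and we recover precisely the defining isomorphism $\mcQ\mcP\cong\mcP\mcQ\oplus {\mathbf 1}$ of $\H$.

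On $2$-cells the assignment reads off the degree-one part of the $'\H'$-action. The upward crossing goes to the symmetric-group transposition on $\sP^2$; the downward crossing goes to the transposition on $\sQ^2$, whose degree-zero nature lets it preserve the top-graded summand $\sQ_2^2$ (the unique degree-$2$ piece of $\sQ^2$) and hence descend to an endomorphism there; and the mixed crossing together with the cups and caps of $\H$ go to those $2$-morphisms of $'\H'$ carrying a single dot $v$. The reason for choosing the shifted summand $\sQ_2$ rather than $\sQ_1$ is exactly that it makes these $v$-decorated cups and caps the structure morphisms, which is what will produce Khovanov's normalization of the clockwise circle below.

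It then remains to verify the four defining relations (\ref{local1}), (\ref{local2.1}), (\ref{local3}), (\ref{local4.1}) of $\H'$. The braid relations (\ref{local1}) reduce to the symmetric-group relations already satisfied by the crossings. Relation (\ref{local2.1})---the mixed crossing written as the identity minus a single cup--cap---is the diagrammatic shadow of the decomposition $\sQ_2\sP\cong\sP\sQ_2\oplus {\mathbf 1}$ from (\ref{e:41}), with the cup--cap implementing the idempotent projection onto the ${\mathbf 1}$-summand. Relation (\ref{local3}) descends directly from its counterpart in the $'\H'$-action. The decisive relation is (\ref{local4.1}), which requires the clockwise circle to equal $1$: under our assignment this circle is the clockwise circle of $'\H'$ decorated by a single dot $v$, so by (\ref{local4}) it evaluates to $\tr(v)=1$, exactly as needed. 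I expect this final normalization to be the main obstacle, since it is precisely where the shifted choice $\sQ_2$ is indispensable---had we used $\sQ_1$ the circle would carry the dot $1$ and evaluate to $\tr(1)=0$, contradicting (\ref{local4.1}). Once the four relations are confirmed on $\H'$, the representation extends uniquely to the Karoubi envelope $\H$ by the universal property of the Karoubi completion, yielding the representation with $\mcP\mapsto\sP$ and $\mcQ\mapsto\sQ_2$ asserted in the corollary.
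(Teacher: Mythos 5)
Your overall strategy coincides with the paper's (whose proof is a two-line sketch deferring to \cite{WWY}): transport the Fock-space action of $'\H'$ on $\bar{\mathcal F}$, send $\mcP\mapsto\sP$ and $\mcQ\mapsto\sQ_2$, realize Khovanov's generating $2$-morphisms by $v$-decorated diagrams of $'\H'$, absorb the shift in (\ref{e:41}) into degrees on natural transformations, and pass to Karoubi envelopes. Your identification of the circle relation (\ref{local4.1}) as the decisive point, settled by $\tr(v)=1$, is exactly the mechanism behind the paper's isomorphism ${\mathcal G}'\cong{\mathcal S}'$ (Theorem \ref{theorem:6}), so the route is the right one.

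There is, however, a concrete error in your $2$-cell assignment, and as literally stated it destroys the very relation you call decisive. You assert that ``the mixed crossing together with the cups and caps of $\H$ go to those $2$-morphisms of $'\H'$ carrying a single dot $v$.'' In the correct prescription --- which is visible in the paper's generators (\ref{e:32}) and (\ref{e:32}.1) for ${\mathcal S}'$ --- the crossings carry no dots at all, and within each cup/cap adjunction pair \emph{exactly one} of the two morphisms carries the dot: for the $\sQ_2$ choice, the cap $\pi_{22}$ carries $v$ while the cup $\phi_{22}$ is undotted. If both the cup and the cap carried $v$, the clockwise circle would carry $v\cdot v=0$ in $\Lambda^*(V)$ and would evaluate by (\ref{local4}) to $\tr(0)=0\neq 1$, contradicting (\ref{local4.1}); likewise relation (\ref{local2.1}) would no longer match the restriction of (\ref{local2}) to the $\sQ_2$-summand (that restriction works precisely because the undotted cap factors through the unshifted $\mathbf 1$, which by (\ref{e:40}) lives in $\sQ_1\sP$ and hence vanishes on $\sQ_2\sP$, leaving the single correction term whose cap carries the $v$). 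Your later claim that the circle carries a single dot $v$ is the correct conclusion, but it contradicts the assignment you wrote down; the proof is repaired by the one-dot-per-pair placement, after which the verification proceeds as you describe.
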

\begin{proof}
We need only define the natural transformations between the endofunctors of $\bar{\mathcal F}$
generated by $\sP$ and $\sQ_2$ in a similar way as in \cite{WWY}, and the dimension shiftings
on the endofunctors of $\bar{\mathcal F}$ can be omitted and replaced by some degrees on the
natural transformations.
\end{proof}
By the above corollary, we can give an alternative proof of the surjectivity of $\gamma$ in
Theorem \ref{theorem:main}. The dimension shifting on $V_{\mu}$ in the definition of
$\sQ_2(V_{\lambda})$ gives rise to degrees on the natural transformations, hence filtrations
like (\ref{filtration}), therefore the proof in Section \ref{section:4} goes through.

All the above discussion can be generalized to the case in which $A={\C}[x_1,x_2,\cdots,x_m]$
is the polynomial ring over $\C$ of $m$ variables, then the resolution (\ref{resofC}) is
replaced by the Koszul resolution
\[
0\lra \wedge^m (A^m)\lra \cdots\lra \wedge^2(A^m)\lra A^m\lra A\lra A/I\cong{\C}\lra 0
\]
where $I=(x_1,x_2,\cdots,x_m)A$. Such a setting would lead to categorifications of Heisenberg
algebras with possible geometric bearings to the framed moduli space of torsion free sheaves
on surfaces, which we will give detailed discussion somewhere else.

\end{document}